\newtheorem{theorem}{Theorem}[section]
\newtheorem{proposition}[theorem]{Proposition}
\newtheorem{remark}[theorem]{Remark}
\numberwithin{equation}{section}
\newcommand{\bs}[1]{\boldsymbol{#1}}
\pgfplotsset{compat=1.14}
\title{Nonlinear dispersion in wave-current interactions}
\author{Darryl D. Holm\quad\hbox{and}\quad Ruiao Hu \\ \smallskip\large
	Department of Mathematics\\ 
	Imperial College London SW7 2AZ, UK
	\\ \bigskip\normalsize
	d.holm@ic.ac.uk and ruiao.hu15@ic.ac.uk} 
\date{In honour of Tony Bloch's 65th birthday. Happy birthday, Tony! }                                          
\begin{document}
	
\maketitle

\begin{abstract}
Via a sequence of approximations of the Lagrangian in Hamilton's principle for dispersive nonlinear gravity waves we derive a hierarchy of Hamiltonian models for describing wave-current interaction (WCI) in nonlinear dispersive wave dynamics on free surfaces. A subclass of these WCI Hamiltonians admits \emph{emergent singular solutions} for certain initial conditions. These singular solutions are identified with a singular momentum map for left action of the diffeomorphisms on a semidirect-product Lie algebra. This semidirect-product Lie algebra comprises vector fields representing horizontal current velocity acting on scalar functions representing wave elevation. We use computational simulations to demonstrate the dynamical interactions of the emergent wavefront trains which are admitted by this special subclass of Hamiltonians for a variety of initial conditions.\\

\noindent
In particular, we investigate: 
\\(1) A variety of localised initial current configurations in still water whose subsequent propagation generates surface-elevation dynamics on an initially flat surface; and 
\\(2) The release of initially confined configurations of surface elevation in still water that generate dynamically interacting fronts of localised currents and wave trains. 
\\The results of these simulations show intricate wave-current interaction patterns whose structures are similar to those seen, for example, in Synthetic Aperture Radar (SAR) images taken from the space shuttle. 
\end{abstract}








\section{Introduction}\label{sec-1}

The sea-surface disturbances whose trains of curved wavefronts trace the propagation of internal gravity waves on the ocean thermocline hundreds of meters below the surface may be observed in many areas of strong tidal flow.
For example, the passage of the Atlantic Ocean tides through the Gibraltar Strait produces trains of curved sea-surface wavefronts expanding into the Mediterranean Sea. Likewise, the passage of the Pacific Ocean tides through the Luzon Strait between Taiwan and the Philippines produces trains of curved sea-surface wavefronts expanding into the South China Sea.  These coherent trains of expanding curved wavefront disturbances are easily observable because they are strongly nonlinear. Their sea-surface signatures in the South China Sea may even be seen from the Space Shuttle  \cite{BLLH2017,LCHL1998,ZLL2014}, as prominent crests of wave trains move in great arcs hundreds of kilometres in length and traverse sea basins thousands of kilometres across. Figures \ref{fig: Gibralter} and \ref{fig: South China Sea} show SAR images of the signatures of internal waves on sea surface.
    \begin{figure}[H]
        \centering
        \begin{subfigure}[b]{0.48\textwidth}  
			\centering 
			\includegraphics[width=\textwidth, height = \textwidth]{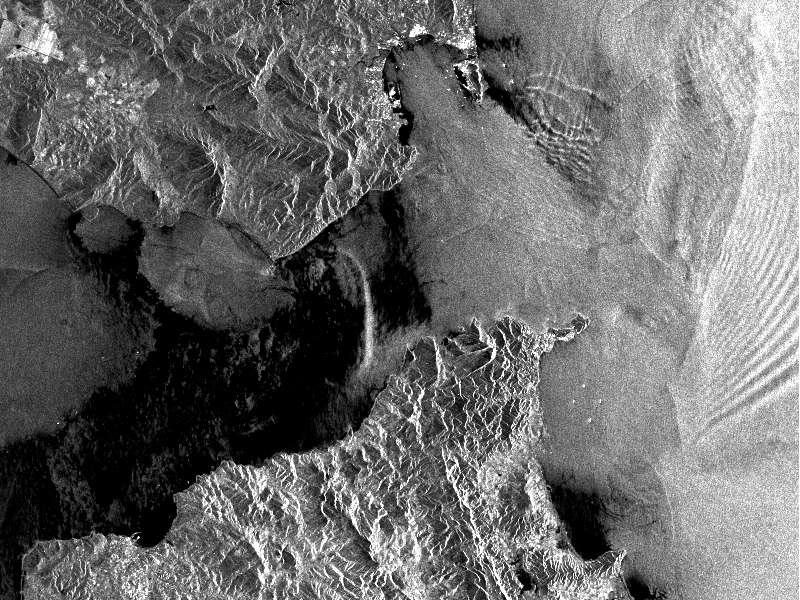}
		\end{subfigure}
	    \begin{subfigure}[b]{0.48\textwidth}  
			\centering 
			\includegraphics[width=\textwidth, height = \textwidth]{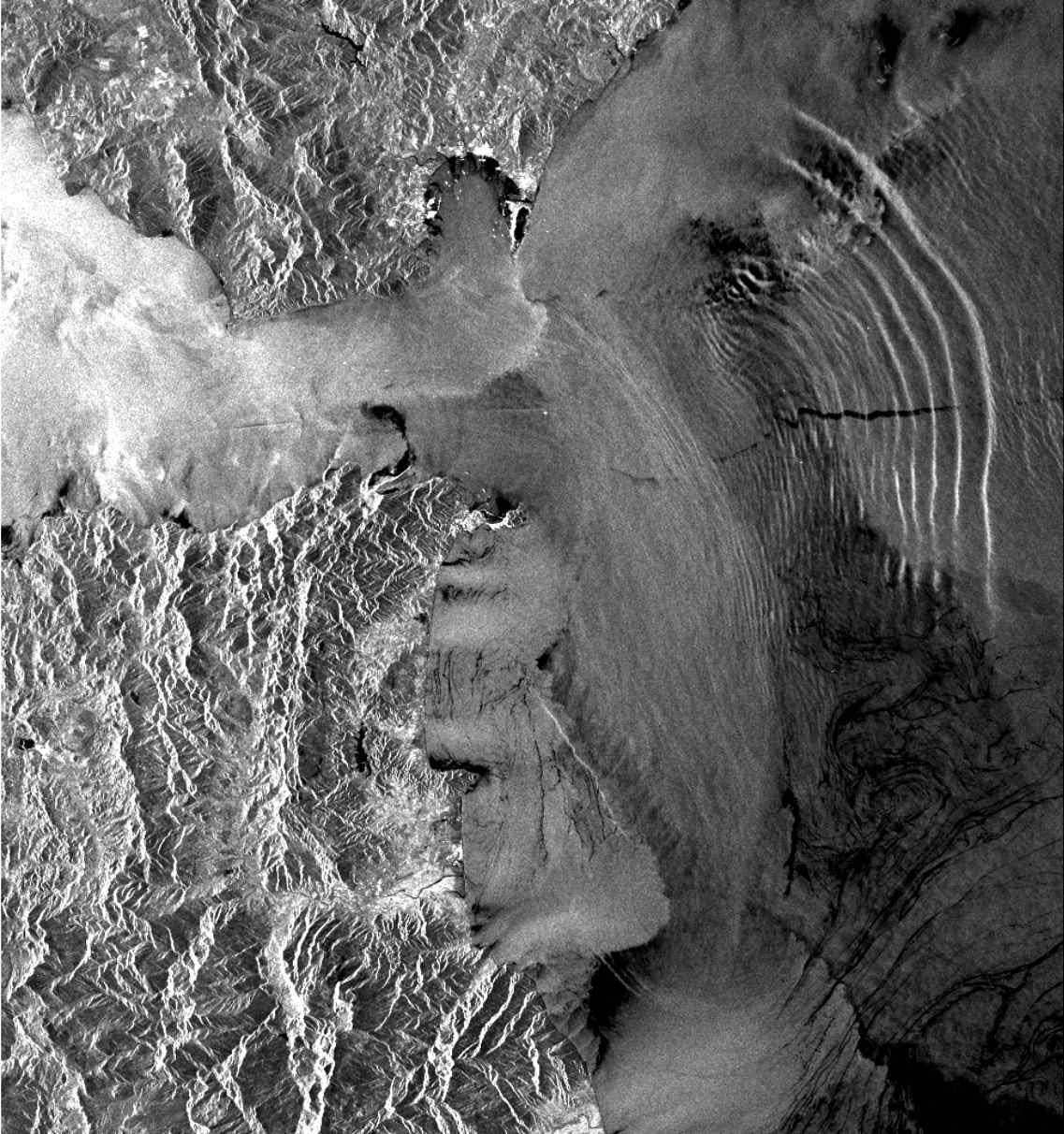}
		\end{subfigure}
        \caption{Synthetic Aperture Radar (SAR) Image of internal-wave signatures near Gibraltar Strait
        Taken from \url{https://earth.esa.int/web/guest/missions/esa-operational-eo-missions/ers/instruments/sar/applications/tropical/-/asset_publisher/tZ7pAG6SCnM8/content/oceanic-internal-waves}}
    \label{fig: Gibralter}
    \end{figure}
    \begin{figure}[H]
        \centering
		\includegraphics[width=\textwidth, height = .75\textwidth]{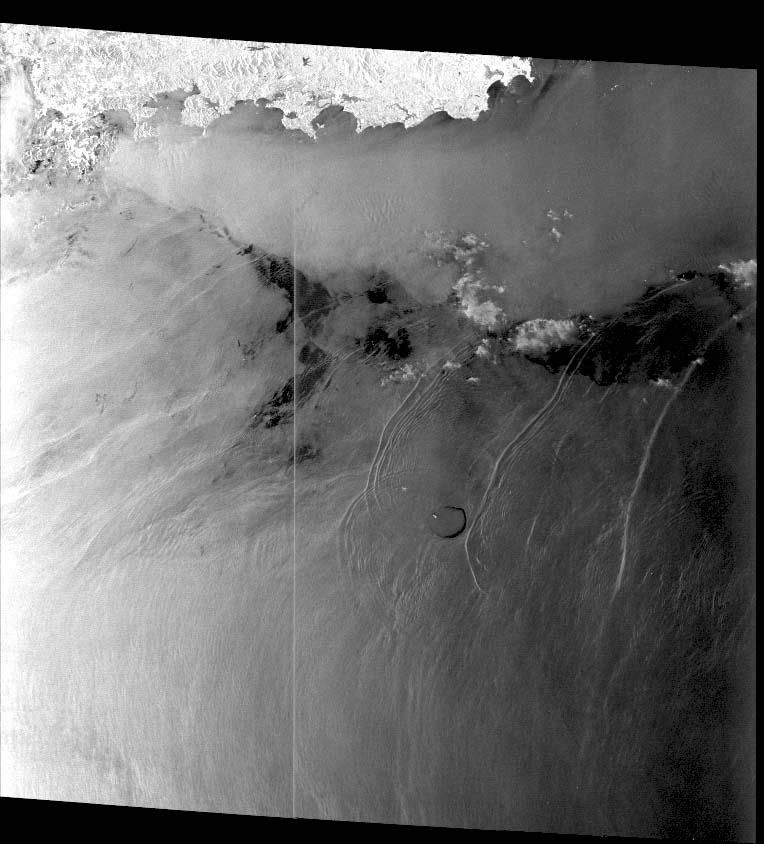}
        \caption{Synthetic Aperture Radar (SAR) Image of internal-wave signatures on the South China Sea. Taken from \url{https://earth.esa.int/web/guest/missions/esa-operational-eo-missions/ers/instruments/sar/applications/tropical/-/asset_publisher/tZ7pAG6SCnM8/content/oceanic-internal-waves}}
    \label{fig: South China Sea}
    \end{figure}

\paragraph{Multi-layer modelling of internal waves.}
The emission of surface effects near the Gibralter Strait observed in the SAR images seen in figure \ref{fig: Gibralter} are short term effects. In contrast, the surface effects seen in SAR images near Dong Sha Atoll in the South China Sea in figure \ref{fig: South China Sea} are long term effects involving internal wave propagation over hundreds of kilometers. The short term behaviour of internal waves has been modelled with some success using the well known multi-layer Green-Naghdi (MGN) equations \cite{JC2002}. However, longer term modelling of these waves has been problematic, because MGN and its rigid-lid version, the Choi-Camassa (CC) equation \cite{CC1996,CC1999}, were both shown in \cite{LW1997} to be \emph{ill-posed} in the presence of either bathymetry or shear. For example, even the shear induced by a single travelling wave causes the linear growth-rate of a perturbation of MGN or CC solutions to increase without bound as a function of wave number. 

Until recently, the ill-posedness of MGN or CC solutions had prevented convergence under grid refinement of the numerical simulations of these waves over long times, because the cascade of energy to smaller scales would eventually build up at the highest resolved wave number. Regularisation was possible by keeping higher-order terms in an asymptotic expansion, as in for example \cite{BC2009}. However, such methods tended to destroy the Hamiltonian property of the system and also degrade its travelling wave properties. Moreover, if one is to consider the problem of wave generation and propagation at sea, one must consider the effects of bathymetry and shear, both of which may induce instability. Thus, the MGN equations had to be modified to make them well-posed. A recent review of the various approaches to regularising the MGN is given in \cite{D2021}. The analysis in \cite{D2021} focuses on the \emph{Camassa-Holm regime} of asymptotic expansion for nonlinear shallow water waves defined in \cite{CL2009}. The present paper also focuses on this asymptotic expansion. 

\noindent
The multi-layer square-root-${D}$ ($ML\sqrt{D}$) system of nonlinear wave equations introduced in \cite{CHP2010} satisfies three fundamental properties that would be desired in a viable regularisation of MGN. Namely, 
\begin{enumerate}
\item $ML\sqrt{D}$ is linearly well-posed and Hamiltonian; 
\item $ML\sqrt{D}$ preserves the MGN linear dispersion relation for fluid at rest; 
\item $ML\sqrt{D}$ travelling wave solutions agrees with the MGN and KdV $sech^2$ travelling wave form in the absence of imposed background shear. 
\end{enumerate}
Thus, the $ML\sqrt{D}$ Hamiltonian system remains well-posed in the presence of shear and its solutions agree with those of the MGN system in the absence of shear \cite{CHP2010}. With these properties in mind, we shall choose the $ML\sqrt{D}$ Hamiltonian system as the basis for the present work. 

\paragraph{Aims of the present paper.} The overall aim of the present paper is to model the internal-wave surface signatures seen by SAR images such as those in figures \ref{fig: Gibralter} and \ref{fig: South China Sea}. For this purpose, the investigations of the present paper will focus on the theoretical and computational simulation properties of the solutions of the \emph{single-layer} case of $ML\sqrt{D}$, known as $1L\sqrt{D}$. The $1L\sqrt{D}$ model possesses three well-known variants. These are the two-component Camassa-Holm equation (CH2) and the modified CH2 equation (ModCH2) with $H^1$ and $H_{div}$ kinetic energy norms. We will derive these variants and then focus computational simulations on the ModCH2 equation with the $H^1$ kinetic energy norm, which relates to previous work in \cite{FHT2001,FHT2002,HOT2009,HS2013,HT2009}. 

We are inspired by the Synthetic Aperture Radar (SAR) images of the internal-wave signatures of wavefronts on the sea surface shown in figure \ref{fig: Gibralter} and figure \ref{fig: South China Sea}. As mentioned earlier, these wavefronts are known to be driven by internal waves propagating on the interfaces of the stratified layers lying beneath the sea surface \cite{BLLH2017,LCHL1998,ZLL2014}. However, the SAR data only contains the wavefront signatures of the internal waves on the sea surface, as seen from a distance overhead by the Space Shuttle, for example. This means the below-surface processes of their formation cannot be directly observed. To describe the interactions among these wavefront signatures on the surface, we seek a \emph{minimal description} of their dynamics which involves only observable quantities. This minimal model is based on the single-layer version of $ML\sqrt{D}$ which accounts for both kinetic and potential energy. Specifically, we seek to model the formation and dynamics of trains of wavefronts arising from an initial impulse of momentum, or from an initial gradient of surface elevation. We also seek to derive the dynamics of their collisions, including their nonlinear reconnections. In fact, the model we seek would treat the data only as the motion of curves in two dimensions which make optimal use of their kinetic and potential energy over a certain horizontal interaction range. In particular, the minimal model would not attempt to describe the interactions among internal-waves beneath the surface which are believed to produce these wavefronts. 

To formulate such a minimal model of wavefront dynamics, we will derive a sequence of approximate equations in the so-called \emph{Camassa-Holm regime} of nonlinear wave dynamics \cite{D2021}. Starting from the single-layer case ($1L\sqrt{D}$) we will derive the 2D version of the two-component Camassa-Holm equation (CH2). The 1D version of CH2 is well-known for its completely integrable Hamiltonian properties. However, here we will be working in 2D. From CH2, we will obtain the modified two-component Camassa-Holm equation (ModCH2). In 1D, ModCH2 possesses emergent weak solutions supported on points moving along the real line \cite{HT2009,HOT2009}. In the 2D doubly periodic planar case treated here, ModCH2 possesses emergent weak solutions supported on smooth curves embedded in 2D. However, the simulations here do not always capture the singular solutions, indicating that the formation of the singular solution may occur quite slowly. The moving curves in the simulations are meant to model the dynamics of the sea-surface signature wavefronts driven by internal waves interacting below, as seen in the SAR image data. 

Computational simulations of ModCH2 in 2D will be used here to display the various types of interactions among these emergent wave profiles in 2D. These simulations show wave trains with both singular and non-singular profiles emerging from smooth initial conditions. This emergence is followed by reconnections among the wavefronts during their nonlinear interactions. Some of the intricate patterns seen during these 2D simulations turn out to be strikingly similar to those seen in the SAR images shown in figures \ref{fig: Gibralter} and \ref{fig: South China Sea}. 

\paragraph{Plan of the paper.} The plan of the paper is as follows.

Section \ref{sec-1} has provided the problem statement and the main goal of the paper. Namely, we aim to formulate a minimal model of the dynamical wavefront behaviour seen in SAR images such as those shown in figures \ref{fig: Gibralter} and \ref{fig: South China Sea}. We have listed the desired aspects of such a model. These desiderata have already been accomplished in deriving the $ML\sqrt{D}$ model, which provides a multi-layer well-posed description of internal waves in \cite{CHP2010}. Thus, this section has set the context for what follows in the remainder of the paper's investigation of single-layer wavefront interaction dynamics.

Section \ref{sec-2} begins by showing that a certain approximation of the $1L\sqrt{D}$ system easily yields the Hamiltonian two-component Camassa-Holm equation (CH2) in 2D. In one spatial dimension (1D), the CH2 equation is known to be completely integrable by the isospectral method \cite{CZ2006,F2005}. Its 2D behaviour will be discussed here, briefly but not extensively, because our main goal is the study of a further approximation. The further approximation yields the modified CH2 system (ModCH2). As discovered in \cite{HT2009}, the solution ansatz for the dominant behaviour of ModCH2 is given by the singular momentum map discussed in Theorem \ref{singsolnmommap-thm} equation \eqref{sing-soln-thm} in any number of spatial dimensions. 
Its 1D singular solutions were shown to emerge and dominate the ModCH2 dynamics arising from all smooth, confined, initial conditions discussed in \cite{HOT2009}. As we shall see, the 2D ModCH2 solutions simulated here will not always capture the sharp peaks of the singular solutions. 

Section \ref{sec-3} presents selected computational simulations for several classes of solution behaviour which include wavefront collisions and nonlinear reconnections that are quite reminiscent of the wavefront interaction data shown in the SAR images in figure 1 and figure 2. Our simulations are meant to illuminate the variety of interaction behaviours of the singular momentum map solutions of ModCH2 in 2D. These simulations and our observations of their solution behaviour comprise the primary contribution of the present paper. Additional simulations and videos of their dynamics are provided in the supplementary materials.

\section{Euler-Poincar\'e equations for nonlinearly dispersive  gravity waves}\label{sec-2}

\paragraph{Euler-Poincar\'e formulation of the $1L\sqrt{D}$ equation in 2D}$\,$

The Euler-Poincar\'e formulation of the 2D $1L\sqrt{D}$ equation follows from Hamilton's principle $\delta S = 0$ with $ S = \int_0^T \ell(\bs{u},D)dt$ for the following Lagrangian,%
\footnote{In the $1L\sqrt{D}$ Lagrangian, the term representing kinetic energy of vertical motion is proportional to the Fisher-Rao metric, which appears in probability theory. See e.g., \cite{BR1982} for a fundamental discussion of the Fisher-Rao metric and other generalised information metrics in probability theory. The Fisher-Rao metric is also important in information geometry \cite{V2019}. An equivalent form of the Lagrangian $\ell_{1L\sqrt{D}}$ in terms of spatial gradients is given in \eqref{Lag-1L-B}.}
\begin{align}
\ell_{1L\sqrt{D}}(\bm{u},D)=\frac{1}{2}\int D|\bm{u}|^{2}
+\frac{d^{2}}{3}\left(\frac{\partial}{\partial t}\sqrt{D}\right)^{2}
-g\big(D-b(\bs{x})\big)^{2}\mathrm{d}x\,\mathrm{d}y\,.
\label{Lag-1L-A}
\end{align}
Here, we denote fluid velocity as $\bm{u}$, constant mean layer thickness as $d$, bathymetry as $b(\bs{x})$ with $\bs{x}=(x,y)$, and the total depth as $D$, the last of which satisfies the following advection equation, 
\begin{align}
\frac{\partial D}{\partial t} = -\, {\rm div} (D \bm{u})
\,.
\label{eq-D-cont}
\end{align}
In standard form for fluid dynamics, the  motion equation for $1L\sqrt{D}$ is expressed as
\begin{align}
\frac{\partial\bm{u}}{\partial t}+\bm{u}\cdot\nabla\bm{u}=
-g\nabla\big(D-b(\bs{x})\big)
-\frac{d^{2}}{6}\nabla\left(\frac{1}{\sqrt{D}}\frac{\partial^{2}\sqrt{D}}{\partial t^{2}}\right)
\,.
\label{eq-fluid-1L}
\end{align}

\begin{remark}[An alternative form of the $1L\sqrt{D}$ Lagrangian and energy conservation]$\,$

Upon substituting the continuity equation \eqref{eq-D-cont} into the $1L\sqrt{D}$ Lagrangian in \eqref{Lag-1L-A}, one finds an equivalent Lagrangian, written as the difference of the  kinetic and gravitational potential energies,
\begin{align}
\ell_{1L\sqrt{D}}(\bm{u},D)=\frac{1}{2}\int \bm{u}\cdot Q_{op(D)}\bm{u} 
-g\big(D-b(\bs{x})\big)^{2}\mathrm{d}x\,\mathrm{d}y\,.
\label{Lag-1L-B}
\end{align}
Here, the symmetric, positive-definite operator $Q_{op}(D)$ is defined by its action on the velocity vector, as
\begin{align}
Q_{op(D)}\bm{u} = \Big(D - \frac{d^2}{12}\big(D(\nabla D^{-1}{\rm div})D\big)\Big)\bm{u}
\,.
\label{def-Qop}
\end{align}
After an integration by parts, the conserved sum of the kinetic and potential energies may be  expressed as
\begin{align}
E_{1L\sqrt{D}}(\bm{u},D)=\frac{1}{2}\int D\left[|\bm{u}|^2 + \frac{d^2}{12 D^2}({\rm div}D\bm{u})^2\right]
+ g\big(D-b(\bs{x})\big)^{2}\mathrm{d}x\,\mathrm{d}y\,.
\label{Lag-1L-erg}
\end{align}
The conserved total energy in \eqref{Lag-1L-erg} can be regarded as a metric on the space of smooth functions of vector fields and densities over $\mathbb{R}^2$, $(\bm{u},D)\in\mathfrak{X}(\mathbb{R}^2)\times {\rm Den}(\mathbb{R}^2)$. Hence, one can write the total energy for the $1L\sqrt{D}$ in \eqref{Lag-1L-erg} as a squared norm which defines the following metric on $\mathfrak{X}(\mathbb{R}^2)\times {\rm Den}(\mathbb{R}^2)$,
\begin{align}
E_{1L\sqrt{D}}(\bm{u},D)=: \frac{1}{2}\|(\bm{u},D)\|^2 =: \frac{1}{2} G\big((\bm{u},D)\,,\,(\bm{u},D) \big)
\,.\label{Lag-1L-erg-metric}
\end{align}
The Lie-Poisson Hamiltonian structure of the $1L\sqrt{D}$ model in equations \eqref{eq-D-cont} and  \eqref{eq-fluid-1L} with energy \eqref{Lag-1L-erg} is discussed along with two other models to follow in remark \ref{remark-LPstructure}.
\end{remark}

\begin{remark}[Kelvin circulation theorem for the $1L\sqrt{D}$ motion equation \eqref{eq-fluid-1L}]$\,$

Substituting the Lie-derivative relation,
\begin{align}
\mathcal{L}_{u}(\bm{u}\cdot d\bm{x}) 
= \Big(\bm{u}\cdot\nabla\bm{u} + \nabla \frac{|\bm{u}|^2}{2}\Big)\cdot d\bm{x}
\,,
\label{eq-vector-id}
\end{align}
into the motion equation for $1L\sqrt{D}$ in \eqref{eq-fluid-1L} 
implies the following Kelvin theorem for preservation of circulation,
\begin{align}
\frac{d}{dt}\oint_{c(\bm{u})} \bm{u}\cdot {\rm d}\bm{x} 
= \oint_{c(\bm{u})} \big(\partial_t 
+ \mathcal{L}_{u} \big)(\bm{u}\cdot d\bm{x}) 
= 0\,,
\label{Kel-thm}
\end{align}
for any material loop $c(\bm{u})$ moving with the fluid flow.

\paragraph{The $1L\sqrt{D}$ model admits potential flows.}
The  motion equation for $1L\sqrt{D}$ in \eqref{eq-fluid-1L} and the vector calculus relation in \eqref{eq-vector-id} imply that if ${\rm curl}\bm{u}=0$ initially, then it will remain so. In this case, the corresponding velocity potential $\phi(\bm{x},t)$ for curl-free flows given by $\bm{u}=\nabla\phi$ satisfies a Bernoulli equation given by, 
\begin{align}
\partial_t \phi + \frac12 |\nabla\phi|^2 = -\, g \big(D-b(\bs{x})\big)
-\frac{d^{2}}{6} \left(\frac{1}{\sqrt{D}}\frac{\partial^{2}\sqrt{D}}{\partial t^{2}}\right)
\,,
\label{eq-Bernoulli}
\end{align}
and the continuity equation in \eqref{eq-D-cont} becomes
\begin{align}
\frac{\partial D}{\partial t} = -\, {\rm div} (D \nabla\phi)
\,,
\label{eq-D-cont-phi}
\end{align}
for potential flows $\bm{u}=\nabla\phi$.

\smallskip

\noindent

\end{remark}

\begin{proposition}
The Euler-Poincar\'e equation for the Lagrangian functional $\ell_{1L\sqrt{D}}(\bm{u},D)$ in \eqref{Lag-1L-A} yields the $1L\sqrt{D}$ motion equation \eqref{eq-fluid-1L} in 2D.
\end{proposition}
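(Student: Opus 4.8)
The plan is to derive the motion equation \eqref{eq-fluid-1L} directly from Hamilton's principle $\delta S=0$, treating $(\bm{u},D)$ as the reduced variables of an Euler--Poincar\'e system on the semidirect product $\mathfrak{X}(\mathbb{R}^2)\ltimes\mathrm{Den}(\mathbb{R}^2)$, with $D$ advected as a density according to \eqref{eq-D-cont}. The one genuinely non-standard feature is that $\ell_{1L\sqrt{D}}$ in \eqref{Lag-1L-A} depends not only on $\bm{u}$ and $D$ but also on $\partial_t\sqrt{D}$. This places the Lagrangian outside the scope of the textbook Euler--Poincar\'e theorem, whose reduced Lagrangian is a function of $\bm{u}$ and the advected quantities alone. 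I would therefore not invoke that theorem as a black box, but instead compute $\delta S$ from first principles, so that the temporal dependence is handled transparently.

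First I would take free variations of the action and record the two variational derivatives. Since the $\partial_t\sqrt{D}$ term and the potential term are independent of $\bm{u}$, one reads off $\delta\ell/\delta\bm{u}=D\bm{u}$ immediately. For the $D$-variation I would write $\delta\sqrt{D}=\delta D/(2\sqrt{D})$ and integrate by parts in $t$ in the contribution $\int \tfrac{d^2}{3}(\partial_t\sqrt{D})\,\partial_t(\delta\sqrt{D})\,dt$, the temporal boundary terms vanishing because the variation vanishes at $t=0,T$. This produces the effective, action-level derivative
\[
\frac{\delta S}{\delta D} \;=\; \tfrac12|\bm{u}|^2 \;-\; \frac{d^2}{6}\,\frac{\partial_t^2\sqrt{D}}{\sqrt{D}} \;-\; g\big(D-b(\bs{x})\big)\,,
\]
in which the second time-derivative $\partial_t^2\sqrt{D}$ that appears in \eqref{eq-fluid-1L} is generated precisely by this integration by parts.

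Next I would impose the constrained variations $\delta\bm{u}=\partial_t\bm{w}-\ad_{\bm{u}}\bm{w}$ and $\delta D=-\,{\rm div}(D\bm{w})$ for an arbitrary $\bm{w}$ vanishing at the temporal endpoints, substitute, and integrate by parts in space and time, using $-\int (\delta S/\delta D)\,{\rm div}(D\bm{w})=\int D\,\nabla(\delta S/\delta D)\cdot\bm{w}$ and the pairing of $\ad_{\bm u}$ with $\ad^*_{\bm u}$ on one-form densities. Collecting the coefficient of $\bm{w}$ yields the Euler--Poincar\'e equation in momentum-density form, consistent with the Kelvin theorem \eqref{Kel-thm},
\[
\big(\partial_t+\mathcal{L}_{\bm{u}}\big)\big(D\,\bm{u}\cdot d\bm{x}\otimes dx\,dy\big) \;=\; D\,\nabla\Big(\frac{\delta S}{\delta D}\Big)\cdot d\bm{x}\otimes dx\,dy\,.
\]
To reach the velocity form \eqref{eq-fluid-1L} I would then use that $D$ solves \eqref{eq-D-cont}, so the density part $(\partial_t+\mathcal{L}_{\bm{u}})(D\,dx\,dy)=0$ drops out and the left-hand side equals $D\big(\partial_t\bm{u}+\bm{u}\cdot\nabla\bm{u}+\nabla\tfrac12|\bm{u}|^2\big)\cdot d\bm{x}\otimes dx\,dy$ by the Lie-derivative identity \eqref{eq-vector-id}; dividing by $D$ and cancelling the $\nabla\tfrac12|\bm{u}|^2$ against the same term inside $\nabla(\delta S/\delta D)$ leaves exactly \eqref{eq-fluid-1L}.

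The main obstacle is the explicit time-derivative $\partial_t\sqrt{D}$, which obstructs a direct appeal to the standard Euler--Poincar\'e theorem; the crux is to justify the temporal integration by parts within the constrained variational principle and to track how it reconstructs the $\partial_t^2\sqrt{D}$ forcing term. Everything else---the spatial integrations by parts, the $\ad^*$/Lie-derivative bookkeeping, and the final cancellation of the kinetic-gradient term---is routine once the variational derivatives are in hand. As an independent consistency check I would substitute \eqref{eq-D-cont} into \eqref{Lag-1L-A} to eliminate the time derivative, obtaining the $Q_{op}(D)$ form \eqref{Lag-1L-B}--\eqref{def-Qop} with $\delta\ell/\delta\bm{u}=Q_{op}(D)\bm{u}$, and verify that the ordinary Euler--Poincar\'e theorem then reproduces the same motion equation.
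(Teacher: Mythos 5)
Your proposal is correct, and at its computational core it coincides with the paper's proof: the paper simply cites the Euler--Poincar\'e equation \eqref{EP-eqn} from \cite{CHMR1998,HMR1998}, records the variational derivatives $\delta \ell/\delta \bm{u} = D\bm{u}$ and $\delta \ell/\delta D = \tfrac12|\bm{u}|^2 - g\big(D-b(\bs{x})\big) - \tfrac{d^{2}}{6}\big(\tfrac{1}{\sqrt{D}}\partial_t^{2}\sqrt{D}\big)$ as in \eqref{Lag-var-1LD}, and then substitutes, using the continuity equation \eqref{eq-D-cont} to strip $D$ off the momentum and cancel the $\nabla\tfrac12|\bm{u}|^2$ term. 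Where you genuinely differ is in refusing to use \eqref{EP-eqn} as a black box, on the grounds that $\ell_{1L\sqrt{D}}$ in \eqref{Lag-1L-A} depends on $\partial_t\sqrt{D}$ and so lies outside the hypotheses of the standard theorem. This is a real point: the quantity the paper calls $\delta\ell/\delta D$ is, strictly speaking, the action-level derivative obtained after an integration by parts in time --- exactly the $\delta S/\delta D$ you construct --- and the paper includes the resulting $\partial_t^2\sqrt{D}$ term without comment. Your re-derivation of the Euler--Poincar\'e equation from constrained variations $\delta\bm{u}=\partial_t\bm{w}-\ad_{\bm{u}}\bm{w}$, $\delta D=-\,{\rm div}(D\bm{w})$, with $\bm{w}$ vanishing at the temporal endpoints so that both the spatial and the temporal boundary terms drop, closes this gap transparently; what it buys is a proof that stands on its own for Lagrangians with explicit time-derivative dependence, at the cost of redoing the $\ad^*$ bookkeeping that the cited theorem packages. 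Your endgame (continuity plus the identity \eqref{eq-vector-id}, then cancellation of the kinetic gradient) is precisely what the paper leaves implicit in its final sentence, and your consistency check via the equivalent Lagrangian \eqref{Lag-1L-B} with operator \eqref{def-Qop} mirrors the paper's own remark on that equivalent form.
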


\begin{proof}
The Euler-Poincar\'e equation for a Lagrangian functional $\ell(\bm{u},D)$ is given by \cite{CHMR1998,HMR1998}
\begin{align}
\partial_t \frac{\delta \ell}{\delta \bm{u}} + (\bm{u}\cdot \nabla) \frac{\delta \ell}{\delta \bm{u}}
+ \frac{\delta \ell}{\delta \bm{u}}({\rm div} \bm{u})
+ \frac{\delta \ell}{\delta u^j} \nabla u^j 
=
D \nabla \frac{\delta \ell}{\delta D}
%
\,.
\label{EP-eqn}
\end{align}
The corresponding variational derivatives of the Lagrangian functional $\ell_{1L\sqrt{D}}(\bm{u},D)$ in \eqref{Lag-1L-A} are given by
\begin{align}
\frac{\delta \ell_{1L\sqrt{D}}}{\delta \bm{u}} = D\bm{u} 
\quad\hbox{and}\quad
\frac{\delta \ell_{1L\sqrt{D}}}{\delta D} = \frac12|\bm{u}|^2 - g \big(D-b(\bs{x})\big)
-\frac{d^{2}}{6} \left(\frac{1}{\sqrt{D}}\frac{\partial^{2}\sqrt{D}}{\partial t^{2}}\right)
\,,
\label{Lag-var-1LD}
\end{align}
in which the variations with respect to $\bm{u}$ and $D$ are taken independently. 
Substitution of the variational derivatives in \eqref{Lag-var-1LD} and the continuity equation \eqref{eq-D-cont} into the Euler-Poincar\'e equation \eqref{EP-eqn} yields the  motion equation for $1L\sqrt{D}$ in \eqref{eq-fluid-1L}.
\end{proof}

\paragraph{Deriving the 2-component Camassa-Holm equation (CH2) in 2D}$\,$

The CH2 equation can be immediately derived as a certain approximation of the $1L\sqrt{D}$ equation. To show this derivation directly,  we first introduce an alternative form for the $1L\sqrt{D}$ Lagrangian. 
The alternative form is derived by inserting the advection equation \eqref{eq-D-cont} into formula \eqref{Lag-1L-A} for the $1L\sqrt{D}$ Lagrangian to find the equivalent expression,
\begin{align}
\ell_{1L\sqrt{D}}(\bm{u},D)=\frac{1}{2}\int D \left(|\bm{u}|^{2}
+\frac{d^{2}}{12D^2}\left({\rm div}D\bm{u}\right)^{2}\right)
-g\big(D-b(\bs{x})\big)^{2}\mathrm{d}x\,\mathrm{d}y
\,.
\label{Lag-1L-B-redux}
\end{align}
We now set $D=d$ in the \emph{kinetic energy} terms only, to find 
\begin{align}
\ell_{CH2}(\bm{u},D)=\frac{1}{2}\int d \left(|\bm{u}|^{2}
+\frac{d^{2}}{12}\left({\rm div}\bm{u}\right)^{2}\right)
-g\big(D-b(\bs{x})\big)^{2}\mathrm{d}x\,\mathrm{d}y
\,.
\label{Lag-CH2}
\end{align}

\begin{proposition}
The Euler-Poincar\'e equation for the Lagrangian functional $\ell_{CH2}(\bm{u},D)$ in \eqref{Lag-CH2} yields the CH2 equation with bathymetry $b(\bs{x})$ in 2D, as follows.
\begin{align}
\partial_t \bm{m'} + (\bm{u}\cdot \nabla) \bm{m'}
+ \bm{m'}({\rm div} \bm{u})
+ m'_j \nabla u^j 
=
-gD\nabla\big(D-b(\bs{x})\big)
\,.
\label{eq-CH2}
\end{align}

\end{proposition}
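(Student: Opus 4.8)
The plan is to reuse the Euler-Poincar\'e equation \eqref{EP-eqn} established in the preceding proposition and simply compute the two variational derivatives of $\ell_{CH2}$, then substitute. The only structural difference from the $1L\sqrt{D}$ case is that freezing $D=d$ in the kinetic energy turns the $D$-dependent operator $Q_{op}(D)$ into a constant-coefficient Helmholtz-type operator, which changes the precise form of the fluid momentum but not the algebraic skeleton of the calculation.

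First I would vary with respect to $\bm{u}$, holding $D$ fixed. The term $\frac12\int d\,|\bm{u}|^2\,\mathrm{d}x\,\mathrm{d}y$ contributes $d\bm{u}$. For the dispersive term $\frac{d^3}{24}\int({\rm div}\,\bm{u})^2\,\mathrm{d}x\,\mathrm{d}y$, the variation gives $\frac{d^3}{12}\int({\rm div}\,\bm{u})\,{\rm div}(\delta\bm{u})\,\mathrm{d}x\,\mathrm{d}y$; one integration by parts, whose boundary term vanishes on the doubly periodic domain, moves the divergence onto the coefficient and yields $-\frac{d^3}{12}\nabla({\rm div}\,\bm{u})$. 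I would therefore define the momentum variable $\bm{m'}:=\delta\ell_{CH2}/\delta\bm{u}=d\bm{u}-\frac{d^3}{12}\nabla({\rm div}\,\bm{u})$. Varying instead with respect to $D$, only the potential term $-\frac{g}{2}\int(D-b(\bs{x}))^2\,\mathrm{d}x\,\mathrm{d}y$ survives, so that $\delta\ell_{CH2}/\delta D=-g(D-b(\bs{x}))$.

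To finish, I would substitute $\bm{m'}=\delta\ell_{CH2}/\delta\bm{u}$ into the left-hand side of \eqref{EP-eqn}, which is already exactly the required transport-plus-stretching form $\partial_t\bm{m'}+(\bm{u}\cdot\nabla)\bm{m'}+\bm{m'}({\rm div}\,\bm{u})+m'_j\nabla u^j$, and insert $\delta\ell_{CH2}/\delta D$ into the right-hand side to obtain $D\nabla\big(-g(D-b(\bs{x}))\big)=-gD\nabla(D-b(\bs{x}))$, which is precisely \eqref{eq-CH2}; the advection law \eqref{eq-D-cont} for $D$ then closes the system. I expect no genuine obstacle here: the computation is routine, and the single point demanding care is the integration by parts producing $-\frac{d^3}{12}\nabla({\rm div}\,\bm{u})$ together with correct bookkeeping of the definition of $\bm{m'}$. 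In particular, because the frozen coefficient $d$ rather than $D$ now multiplies the kinetic energy, no spurious $\nabla D$ terms are generated, so that \emph{all} of the $D$-dependence of the motion equation is concentrated in the pressure-like right-hand side.
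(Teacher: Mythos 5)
Your proposal is correct and follows essentially the same route as the paper: compute $\delta\ell_{CH2}/\delta\bm{u}=d\big(\bm{u}-\tfrac{d^2}{12}\nabla{\rm div}\,\bm{u}\big)=:Q_{op}(d)\bm{u}$ and $\delta\ell_{CH2}/\delta D=-g\big(D-b(\bs{x})\big)$, then substitute into the Euler--Poincar\'e equation \eqref{EP-eqn}. The only difference is that you spell out the integration by parts behind the momentum formula, which the paper states without derivation; your bookkeeping and the resulting equation \eqref{eq-CH2} agree exactly with the paper's.
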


\begin{proof}
The corresponding variational derivatives of the Lagrangian functional $\ell_{CH2}(\bm{u},D)$ in \eqref{Lag-CH2} are given by
\begin{align}
\bm{m'}:=\frac{\delta \ell_{CH2}}{\delta \bm{u}}
= d \Big(\bm{u} - \frac{d^{2}}{12}\nabla {\rm div}\bm{u} \Big)
 =: Q_{op}(d)\bm{u}
\qquad\hbox{and}\qquad
\frac{\delta \ell_{CH2}}{\delta D} = -\, g \big(D-b(\bs{x})\big)
\,.
\label{Lag-var-CH2}
\end{align}
The Euler-Poincar\'e equation \eqref{EP-eqn} for the variational derivatives in \eqref{Lag-var-CH2} yields the CH2 equation  in \eqref{eq-CH2}.
\end{proof}

\begin{remark}
Setting $g=0$ and $d=2$ in the CH2 equation \eqref{eq-CH2} recovers the two-dimensional version of the Camassa-Holm equation derived in \cite{KSD2001}. For a comprehensive survey of the role of the Camassa-Holm equation in the wider context of nonlinear shallow water equations, see \cite{I-K2021}.
\end{remark}

\begin{remark}[Solving for velocity $\bm{u}$ from momentum $\bm{m'}$ with the grad-div operator in \eqref{Lag-var-CH2}]\label{rem CH2-solve}$\,$

In equation \eqref{Lag-var-CH2} we see that ${\rm curl}(\bm{m'}/d) ={\rm curl} \bm{u}$ and ${\rm div} (\bm{m'}/d) = (1-\frac{d^2}{12}\Delta){\rm div} \bm{u}$. To take a time step in equation \eqref{eq-CH2}we need to determine $\bm{u}$ from $(\bm{m'}/d)$. One could formally write $G_{Q_{op}(d)}*(\bm{m'}/d) = \bm{u}$ in which one implicitly defines the Green function $G_{Q_{op}(d)}$ for the operator $Q_{op}(d)$ in equation \eqref{Lag-var-CH2}. However, it is also useful to verify that the solution for the velocity $\bm{u}$ from the CH2 momentum $(\bm{m'}/d)$ can be implemented directly, without solving for the Green function explicitly.

The velocity $\bm{u}$ can be obtained from momentum $\bm{m'}$ via their linear operator relation for velocity $\bm{u}$ in equation \eqref{Lag-var-CH2}. For this, one begins with the Hodge decomposition for the velocity. Namely,
\begin{align}
\bm{u} = {\rm curl}\bm{A} + \nabla \phi
\,.\label{Hodge-CH2}
\end{align}
Here, the vector potential $\bm{A}$ is divergence-free ${\rm div}\bm{A}=0$, has zero  mean $\int_{\cal D} \bm{A} d^2x =0$, and satisfies Neumann boundary conditions, $\partial_n\bm{A}|_{\partial \cal D}=0$. The scalar potential $\phi$ vanishes at the boundary. With these conditions, the vector and scalar potentials each satisfy Poisson equations,
\begin{align}
{\rm div}\bm{u}=\Delta\phi
\,, \quad\hbox{and}\quad
{\rm curl}\bm{u}=-\Delta\bm{A}
\,.
\label{PoissonEqns-CH2}
\end{align}

Taking the curl and divergence of the defining relation in \eqref{Lag-var-CH2} for the momentum $\bm{m'}$ in terms of the velocity $\bm{u}$ then yields the inversion formulas for the velocity potentials,
\begin{align}
{\rm div}(\bm{m'}/d) = (1-\alpha^2\Delta){\rm div}\bm{u} = (1-\alpha^2\Delta)\Delta\phi
\,, \quad\hbox{and}\quad
{\rm curl}(\bm{m'}/d) = {\rm curl}\bm{u}=-\Delta\bm{A}
\,.\label{m-u-CH2}
\end{align}
Inverting the relations in \eqref{m-u-CH2} for the vector and scalar velocity potentials $\bm{A}$ and $\phi$ then yields the velocity via the Hodge decomposition in \eqref{Hodge-CH2}.
\end{remark}

\begin{remark}[Kelvin theorem and conservation laws for the CH2 motion equation \eqref{eq-CH2}]\label{KelThm-conslawsCH2}$\,$

A geometric way of writing the Euler-Poincar\'e equation for any Lagrangian functional $\ell(\bm{u},D)$ in $n$ dimensions arises by regarding the fluid velocity as a vector field, denoted $u$, the depth as an $n$-form, denoted ${\sf D}$, and its dual momentum density as a 1-form density, denoted $m:={\delta \ell}/{\delta u}$. In coordinates, this is \cite{CHMR1998,HMR1998}
\begin{align}
u :=\bm{u}\cdot\nabla \in \mathfrak{X}(\mathbb{R}^n)
\,,\quad
{\sf D}=D\,{\rm d}^nx\in {\rm Den}(\mathbb{R}^n)
\quad\hbox{and}\quad
\frac{\delta \ell}{\delta u} 
:= \frac{\delta \ell}{\delta \bm{u}} \cdot {\rm d} \bm{x}\otimes {\rm d}^nx \in \mathfrak{X}^*(\mathbb{R}^n)
\,.
\label{u&m-def}
\end{align}
For the Euler-Poincar\'e variational principle, one also assumes a natural $L^2$ pairing, $\langle\,\cdot\,,\,\cdot\,\rangle$, so that 
\begin{align}
0 = \delta S = \int \left\langle \frac{\delta \ell}{\delta u}\,,\, \delta u \right\rangle
+ \left\langle \frac{\delta \ell}{\delta {\sf D}}\,,\, \delta {\sf D} \right\rangle
\,{\rm d}t
=
0
\,.
\label{EP-pair}
\end{align}
In this framework, the Euler-Poincar\'e equation \eqref{EP-eqn} for a Lagrangian functional $\ell(u,{\sf D})$ and the auxiliary equation for the advection of density $D$ are given by \cite{CHMR1998,HMR1998}
\begin{align}
\big({\partial_t} + \mathcal{L}_u\big) \frac{\delta \ell}{\delta u}
=
{\sf D}\, {\rm d} \frac{\delta \ell}{\delta {\sf D}}
\quad\hbox{and}\quad
\big({\partial_t} + \mathcal{L}_u\big) {\sf D} = 0
\,.
\label{EP-eqn-Lie}
\end{align}
where ${\sf D}$ is a density and ${\delta \ell}/{\delta {\sf D}}$ is a scalar function, according to our $L^2$ pairing in \eqref{EP-pair}.


The Kelvin circulation theorem in this framework is then proved, as follows,
\begin{align}
\frac{d }{d t}\oint_{c(u)} \frac{1}{\sf D}\frac{\delta \ell}{\delta u} 
=
\oint_{c(u)} \big({\partial_t} + \mathcal{L}_u\big) \Big(\frac{1}{\sf D}\frac{\delta \ell}{\delta u}\Big)
=
\oint_{c(u)} {\rm d} \frac{\delta \ell}{\delta {\sf D}} = 0
\,.
\label{Kel-Lie}
\end{align}
In particular, according to \eqref{eq-CH2} and \eqref{Lag-var-CH2}, the Kelvin circulation theorem for CH2 with Lagrangian $\ell_{CH2}$ in \eqref{Lag-CH2} is given by 
\begin{align}
\frac{d }{d t}\oint_{c(u)} \frac{1}{\sf D}\frac{\delta \ell_{CH2}}{\delta u} 
=
\frac{d }{d t}\oint_{c(u)} \frac{d}{D}\Big(\bm{u} - \frac{d^{2}}{12}\nabla {\rm div}\bm{u} \Big)\cdot {\rm d}\bm{x}
= 0
\,.
\label{Kel-CH2}
\end{align}
For CH2 in 2D, applying the Stokes theorem to the Kelvin theorem \eqref{Kel-CH2} implies conservation along flow trajectories of a \emph{potential vorticity}. Namely,
\begin{align}
\partial_t \sigma +  \bm{u} \cdot\nabla \sigma = 0 
\,,\quad\hbox{where}\quad
\sigma := \bm{\hat{z}}\cdot {\rm curl}\left(
\frac{d}{D}\Big(\bm{u} - \frac{d^{2}}{12}\nabla {\rm div}\bm{u} \Big)\right)
\,.
\label{Stokes-CH2-2D}
\end{align}
This means, in particular, that if $\sigma$ vanishes initially, it will continue to do so. 

Moreover, equation \eqref{Stokes-CH2-2D} and the continuity equation in \eqref{eq-D-cont} imply preservation of the integral quantities (enstrophies) given by
\begin{align}
C_\Phi:=\int \Phi(\sigma)\,D{\rm d}x{\rm d}y
\,,
\label{Casimirs-CH2}
\end{align}
for any differentiable function $\Phi$.

For CH2 in 3D, applying the Stokes theorem to the Kelvin circulation conservation law for the CH2 model in {Kel-CH2} implies the advection of a potential-vorticity vector field, $\bm{\sigma}$, given in components by
\begin{align}
\partial_t \bm{\sigma} +  \bm{u} \cdot\nabla \bm{\sigma} - \bm{\sigma}\cdot\nabla\bm{u}  = 0 
\,,\quad\hbox{where}\quad
\bm{\sigma} := D^{-1} {\rm curl}\bm{v} 
\,,\quad\hbox{with}\quad
\bm{v} = D^{-1}\bm{m'} = 
\frac{d}{D}\Big(\bm{u} - \frac{d^{2}}{12}\nabla {\rm div}\bm{u} \Big)
\,.
\label{Stokes-CH2-3D}
\end{align}
In 3D, the CH2 equation \eqref{Stokes-CH2-2D} and the continuity equation in \eqref{eq-D-cont} imply preservation of the integral quantity (helicity) given by
\begin{align}
\Lambda :=\int_{\mathcal{B}_t } \bm{v}\cdot{\rm curl}\bm{v} \,{\rm d}^3x 
= \int_{\mathcal{B}_t } \bm{v}\cdot{\rm d}\bm{x}\wedge {\rm d} (\bm{v}\cdot{\rm d}\bm{x})
\,.
\label{helicity-CH2-3D}
\end{align}
The helicity integral in \eqref{helicity-CH2-3D} is taken over any volume (blob)  ${\cal B}_t = \phi_t{\cal B}_0$ of fluid moving with the flow, $\phi_t$, with outward normal boundary condition ${\rm curl}\bm{v}\cdot \bm{\hat{n}}=0 $ on the surface $\partial {\cal B}$.
\end{remark}

\paragraph{Deriving the Modified 2-component Camassa-Holm equation (ModCH2) in 2D and 3D}$\,$

\noindent
To derive the ModCH2 model equations, we modify the \emph{potential energy} terms in the Lagrangian \eqref{Lag-CH2} for the CH2 model, as follows
\begin{align}
\begin{split}
\ell_{ModCH2}(\bm{u},D)&=\frac{1}{2}\int d \left(|\bm{u}|^{2}
+\frac{d^{2}}{12}\left({\rm div}\bm{u}\right)^{2}\right)
\\& \qquad -g\left[\big(D-b(\bs{x})\big)G_{Q_{op}(d)}*\big(D-b(\bs{x})\big) 
\right]\mathrm{d}^nx
\,,
\end{split}
\label{Lag-ModCH2}
\end{align}
where convolution with the Green function $G_{Q_{op}(d)}$ acts as a smoothing operator in the potential energy term in the ModCH2 Lagrangian. The $({\rm div}\bm{u})^2$ term in \eqref{Lag-ModCH2} effectively replaces the vertical kinetic energy by the divergence of the horizontal velocity.

\begin{proposition}
The Euler-Poincar\'e equation for the Lagrangian functional $\ell_{ModCH2}(\bm{u},D)$ in \eqref{Lag-ModCH2} yields the following \emph{modified} CH2 equation,
\begin{align}
\partial_t \bm{m} + (\bm{u}\cdot \nabla) \bm{m}
+ \bm{m}({\rm div} \bm{u})
+ m_j \nabla u^j 
=
-gD\nabla G_{Q_{op}(d)}*\big({D}-{b}(\bs{x})\big)
=:
-gD\nabla\big(\overline{D}-\overline{b}(\bs{x})\big).
\label{eq-ModCH2}
\end{align}

\end{proposition}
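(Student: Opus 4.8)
The plan is to follow exactly the template already used for the two preceding propositions: compute the two variational derivatives $\delta\ell_{ModCH2}/\delta\bm{u}$ and $\delta\ell_{ModCH2}/\delta D$ and substitute them, together with the continuity equation \eqref{eq-D-cont}, into the Euler-Poincar\'e equation \eqref{EP-eqn}. The crucial structural observation is that $\ell_{ModCH2}$ in \eqref{Lag-ModCH2} differs from $\ell_{CH2}$ in \eqref{Lag-CH2} \emph{only} in the potential-energy term, so the momentum variable is inherited unchanged and essentially all the work is concentrated in the $D$-variation.

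First I would note that the kinetic-energy part of \eqref{Lag-ModCH2} is identical to that of \eqref{Lag-CH2}. Hence the velocity variation reproduces the CH2 momentum of \eqref{Lag-var-CH2} verbatim,
\begin{align}
\bm{m} := \frac{\delta \ell_{ModCH2}}{\delta \bm{u}} = d\Big(\bm{u} - \frac{d^2}{12}\nabla\,{\rm div}\,\bm{u}\Big) = Q_{op}(d)\bm{u}
\,.
\end{align}
This already supplies the entire left-hand side of \eqref{eq-ModCH2}, which carries the same advective transport operator acting on $\bm{m}$ that appears in the CH2 equation \eqref{eq-CH2}.

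The real content is the variation in $D$. Writing $f := D - b(\bs{x})$ and abbreviating the convolution as $G := G_{Q_{op}(d)}*$, the potential term is $-\tfrac{g}{2}\int f\,(Gf)\,\mathrm{d}^nx$. Taking $\delta D = \delta f$ and applying the product rule yields two contributions, $\int \delta f\,(Gf)$ and $\int f\,(G\delta f)$. The key step is to invoke the \emph{self-adjointness} of $G$: because $Q_{op}(d)$ is symmetric and positive-definite (as recorded beneath \eqref{def-Qop}), its Green's function is symmetric, so that $\int f\,(G\delta f) = \int (Gf)\,\delta f$ under the $L^2$ pairing. The two contributions then coincide, the resulting factor $2$ cancels the prefactor $\tfrac12$, and I obtain
\begin{align}
\frac{\delta \ell_{ModCH2}}{\delta D} = -g\,G_{Q_{op}(d)}*\big(D-b(\bs{x})\big) =: -g\big(\overline{D} - \overline{b}(\bs{x})\big)
\,.
\end{align}

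Finally I would substitute both variational derivatives into the right-hand side $D\nabla(\delta\ell/\delta D)$ of \eqref{EP-eqn}, producing the forcing $-gD\nabla G_{Q_{op}(d)}*(D-b) = -gD\nabla(\overline{D}-\overline{b})$ and hence exactly \eqref{eq-ModCH2}. I expect the only delicate point to be the justification of this self-adjointness, together with the vanishing of the boundary terms generated when the convolution kernel is integrated by parts; on the doubly-periodic planar domain used throughout the paper these boundary contributions vanish, so the symmetric-operator identity holds as stated. Everything else is precisely the substitution already performed for CH2.
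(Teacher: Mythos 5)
Your proposal is correct and follows essentially the same route as the paper: compute the two variational derivatives of $\ell_{ModCH2}$ and substitute them into the Euler--Poincar\'e equation \eqref{EP-eqn}. The only difference is one of detail: the paper simply states $\delta\ell_{ModCH2}/\delta D = -g(\overline{D}-\overline{b})$, whereas you explicitly justify it via the self-adjointness of the convolution with $G_{Q_{op}(d)}$, a step the paper leaves implicit.
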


\begin{proof}
The corresponding variational derivatives of the Lagrangian functional $\ell_{ModCH2}(\bm{u},D)$ in equation  \eqref{Lag-ModCH2} are given by
\begin{align}
\bm{m} := \frac{\delta \ell_{ModCH2}}{\delta \bm{u}} 
= d \Big(\bm{u} - \alpha^2\nabla {\rm div}\bm{u} \Big) 
\quad\hbox{and}\quad
\frac{\delta \ell_{ModCH2}}{\delta D} = -g \big(\overline{D}-\overline{b}(\bs{x})\big)
\,.
\label{Lag-var-ModCH2}
\end{align}
The Euler-Poincar\'e equation \eqref{EP-eqn} for the variational derivatives in \eqref{Lag-var-ModCH2} yields the ModCH2 equation in \eqref{eq-ModCH2}. 
\end{proof}

In summary, the velocity $\bm{u}$ in the ModCH2 equation \eqref{eq-ModCH2} is obtained from momentum $\bm{m}$ at each time step by inverting the grad-div Helmholtz operator in \eqref{Lag-var-ModCH2} as explained in remark \ref{rem CH2-solve}. This procedure is valid in both 2D and 3D.  

\begin{remark}[Kelvin theorem, conservation laws and an additional property for ModCH2]\label{KelThmConsMomap-ModCH2}$\,$

\noindent
In combination with the continuity equation for the total depth as $D$ in \eqref{eq-D-cont}, the Kelvin theorem and conservation laws for ModCH2 may be obtained as analogues of those for CH2 in both 2D and 3D in remark \ref{KelThm-conslawsCH2}. 
However, ModCH2 was introduced in \cite{HOT2009} to provide an additional structural feature which goes beyond the CH2 equation. Namely, ModCh2 is both a geodesic equation and an Euler-Poincar\'e equation. In the next section, we will discuss the implications of these dual properties.  
\end{remark}

In addition to possessing the same Kelvin theorem and all of the corresponding conservation laws for CH2 discussed in remark \ref{KelThm-conslawsCH2} which accompany its derivation as an Euler-Poincar\'e equation, the Lie-Poisson Hamiltonian formulation of the ModCH2 equation places it into a class of equations which admit singular momentum map solutions in any number of dimensions. This is the subject of the next section.

\subsection{Singular momentum map solutions for Modified CH2 (ModCH2)}\label{ssec: singmomap}

The purpose of this section is to explain how the dual properties of ModCH2 in being both a geodesic equation and an Euler-Poincar\'e equation endow it with singular momentum map solutions in any number of dimensions.
That is, ModCH2 admits singular solutions that are represented as a sum over Dirac deltas supported on curves in the plane, or surfaces in three dimensions, which are advected by the flow of the currents which they themselves induce throughout the rest of the domain. 

Specifically, the singular solutions are given in Theorem \ref{singsolnmommap-thm} by
\begin{equation}
\big({\bf m},D\big)=\sum_{i=1}^N\int\!\big({\bf P}_i(s,t),w_i(s)\big)\,\delta\!\left({\bf
x-Q}_i(s,t)\right)\,{\rm d}^ks
\,,
\label{SDsingsoln}
\end{equation}
where $s$ is a coordinate on a submanifold $S$ of $\mathbb{R}^n$, exactly as in the case of EPDiff. For $\mathbb{R}^2$, the case dim$\,S=1$ yields fluid variables supported on filaments moving under the action of the diffeomorphisms, while for $\mathbb{R}^3$ dim$\,S=2$ yields fluid variables supported on moving surfaces. 
The geometric setting of the peakon solutions of the Camassa-Holm equation and its extension to pulson solutions of EPDiff was established in \cite{HMR1998}. Following the reasoning in \cite{HM2005,HT2009},
one may interpret ${\bf Q}_i$ in \eqref{SDsingsoln} as a smooth embedding in Emb$(S,\mathbb{R}^n)$ and $P_i={\bf P}_i\cdot{\rm d}{\bf Q}_i$ (no sum) as the canonical 1-form on the cotangent bundle $T^*$Emb$(S,\mathbb{R}^n)$ for the $i$-th smooth embedding.

In a sense, the singular ModCH2 wave-currents are analogues for nonlinear wave dynamics of point vortices in 2D and vortex lines in 3D for Euler fluid dynamics. However, unlike vortices and vortex lines in 3D for Euler fluids, the singular ModCH2 wave-currents can emerge spontaneously from smooth, spatially confined initial conditions, while the point vortices and vortex lines do not emerge spontaneously in Euler fluid dynamics. 

\begin{remark}[Shared Lie-Poisson Hamiltonian structure]\label{remark-LPstructure}
As we have seen, all of the models $1L\sqrt{D}$, CH2 and ModCH2 yield semiditect-product Euler-Poincar\'e equations in the class EP(Diff$\,\circledS\,\mathcal{F}$) in equation \eqref{EP-eqn}. Here, $\mathcal{F}$ comprises the smooth scalar functions of the densities ${\sf D}=D\,{\rm d}^nx\in {\rm Den}(\mathbb{R}^n)$ and $\circledS$ denotes the semidirect-product action \cite{CHMR1998,HMR1998}.

In $n$ dimensions, the corresponding Lie-Poisson Hamiltonian equations can be obtained from the Legendre transformation, 
\begin{align}
h(m,{\sf D}):= \langle m,u\rangle - \ell(u,{\sf D})
\,.\label{Leg-xform}
\end{align}
The variational derivatives of the Hamiltonian are given by
\begin{align}
\delta h(m,{\sf D}) = \big\langle \delta m,u\big\rangle 
+ \Big\langle m - \frac{\delta \ell}{\delta u},\delta u \Big\rangle
+ \Big\langle -\,\frac{\delta \ell}{\delta {\sf D}},\delta {\sf D} \Big\rangle
\,.\end{align}
Under the Legendre transformation \eqref{Leg-xform}, the semidirect-product Lie-Poisson Hamiltonian equations corresponding to the Euler-Poincar\'e equations in \eqref{EP-eqn} can be written in three-dimensional matrix component form, as \cite{CHMR1998,HMR1998}
\begin{align}
\frac{\partial }{\partial t } 
\begin{bmatrix}
 m_i \\
 D
\end{bmatrix}
=
-
\begin{bmatrix}
\partial_j m_i + m_i \partial_j & D \partial_i
\\
 \partial_j D  & 0 
\end{bmatrix}
\begin{bmatrix}
\frac{\delta h}{\delta m_j} = u^j \\
 \frac{\delta h}{\delta D} = - \frac{\delta \ell}{\delta D}
\end{bmatrix}
\,.
\label{LP-eqn}
\end{align}
In \eqref{LP-eqn}, one sums over repeated spatial component indices, $i,j=1,2,3$, for each of the Lagrangians $\ell_{1L\sqrt{D}}$, $\ell_{CH2}$, and $\ell_{ModCH2}$, and all three motion equations share the continuity equation for the total depth, $D$, 
\begin{align}
\frac{\partial D}{\partial t} = -\, {\rm div} (D \bm{u})
\,.
\label{eq-D-cont-redux}
\end{align}

When the Lie-Poisson matrix form \eqref{LP-eqn} is extended to $n$ dimensions, the $1L\sqrt{D}$ equations 
describe geodesic motion with respect to the following metric Hamiltonian
\begin{align}
h_{1L\sqrt{D}}(\bm{u},D)=\frac{1}{2}\int \bm{m}\cdot G_{Q_{op}(D)}*\bm{m} 
+g\big(D-b(\bs{x})\big)^{2}\mathrm{d}^nx
\,,
\label{Ham-1L-B}
\end{align}
in which $G_{Q_{op}(D)}$ is the Green function for the symmetric operator $Q_{op}(D)$ in equation \eqref{def-Qop}. That is, 
\begin{align}
G_{Q_{op}(D)}*\bm{m}=\bm{u}
\label{def-Qop(D)}
\end{align}
is the velocity vector for the $1L\sqrt{D}$ model. 

Likewise, the ModCH2 equations describe geodesic motion with respect to the metric Hamiltonian obtained by replacing $G_{Q_{op}(D)}$ by $G_{Q_{op}(d)}$ in equation \eqref{Ham-1L-B}. The ModCH model also has the special feature that its Hamiltonian 
\begin{align}
\begin{split}
h_{ModCH2}(\bm{m},D)&=\frac{1}{2}\int \bm{m}\cdot G_{Q_{op}(d)}*\bm{m}\
+\
g \big(D-b(\bs{x})\big)G_{Q_{op}(d)}*\big(D-b(\bs{x})\big) \,\mathrm{d}^nx
\,,
\end{split}
\label{Ham-ModCH2}
\end{align}
lies in the following class of general metrics (Green functions),
\begin{equation}\label{EPGosH-Ham}
H({\bf m},D)=
\frac12\iint\!{\bf m}({\bf x},t)\cdot \,G_1({\bf x-x}')\,{\bf m}({\bf x}',t)\,{\rm d}^n{\bf x}\,{\rm d}^n{\bf x}'
+
\frac12\iint\!D({\bf x},t)\,G_2({\bf x-x}')\,D({\bf x}',t)\,{\rm d}^n{\bf x}\,{\rm d}^n{\bf x}'
\,.\end{equation}
Importantly for the remainder of the present work, the class of Hamiltonians in \eqref{EPGosH-Ham} admits emergent singular solutions supported on advected embedded spaces.
\end{remark}

In preparation for displaying the computational simulations of the singular solution behaviour for ModCH2, we write the equations in dimension-free form. 

\begin{remark}[Dimension-free form of ModCH2 Lagrangian, $\ell_{ModCH2}$]\label{non-dim-scales}
The Lagrangian for ModCH2 in \eqref{Lag-ModCH2} may be cast into dimension-free form by introducing natural units for horizontal length, $[L]$, horizontal velocity, $[U]$, and time, $[T]=[L]/[U]$, as well as spatially mean vertical depth, denoted $d=\langle D \rangle$, and spatially mean vertical wave elevation, $[\zeta]=\langle D-b(\bs{x})\rangle=d-\langle b\rangle$. In terms of these units, one may define the following two dimension-free parameters: aspect ratio, $\sigma = [d]/[L]$ and elevation Froude number squared, $Fr^2=[U]^2/([g][\zeta])$. 
In addition, the dimension-free form of the symmetric operator $Q_{op}(\sigma)$ is redefined with $\alpha^2:=\sigma^2/12$ as
\begin{align}
Q_{op(\alpha)}\bm{u} := \Big(1 - \alpha^2\big(\nabla {\rm div}\big)\Big)\bm{u}
\,.
\label{def-Qop-ndim}
\end{align}
Consequently, the dimension-free form of the Lagrangian for ModCH2 is given by
\begin{align}
\begin{split}
\ell_{ModCH2}(\bm{u},D)&=\frac{1}{2}\int  \left(|\bm{u}|^{2}
+\alpha^2\left({\rm div}\bm{u}\right)^{2}\right)
\\& \qquad - Fr^{-2}\left[\big(D-b(\bs{x})\big)G_{Q_{op}(\alpha)}*\big(D-b(\bs{x})\big) 
\right]\mathrm{d}^nx
\,,
\end{split}
\label{Lag-ModCH2-ndim}
\end{align}
The constants $\sigma^2\ll1$ and $Fr^{-2}=O(1)$ here are, respectively, the squares of the aspect ratio and the Froude number, which have been obtained in making the expression dimension-free. The final dimension-free number to be defined in the simulations will be the ratio of widths obtained by dividing the width of the initial condition by the filter width, or interaction range, $\alpha=d/\sqrt{12}$ in $Q_{op(\alpha)}$, as defined in \eqref{def-Qop-ndim}.

\end{remark}

\begin{proposition}
The Euler-Poincar\'e equation for the dimension-free Lagrangian functional $\ell_{ModCH2}(\bm{u},D)$ in \eqref{Lag-ModCH2-ndim} yields the dimension-free form of the ModCH2 equation, as follows.
\begin{align}
\partial_t \bm{m} + (\bm{u}\cdot \nabla) \bm{m}
+ \bm{m}({\rm div} \bm{u})
+ m_j \nabla u^j 
=
-Fr^{-2}D\nabla G_{Q_{op}(\alpha)}*\big({D}-{b}(\bs{x})\big)
=:
-Fr^{-2}D\nabla\big(\overline{D}-\overline{b}(\bs{x})\big).
\label{eq-ModCH2-ndim}
\end{align}

\end{proposition}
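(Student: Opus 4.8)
The plan is to apply the general Euler-Poincar\'e equation \eqref{EP-eqn} directly, exactly as was done for the dimensional ModCH2 equation \eqref{eq-ModCH2} in the preceding proposition, but now tracking the dimension-free constants. First I would compute the two variational derivatives of the dimension-free Lagrangian \eqref{Lag-ModCH2-ndim}. Varying with respect to $\bm{u}$, the kinetic energy term $\frac12\int(|\bm{u}|^2+\alpha^2({\rm div}\bm{u})^2)\,{\rm d}^nx$ contributes $\bm{u}-\alpha^2\nabla{\rm div}\bm{u}=Q_{op(\alpha)}\bm{u}$ after an integration by parts on the divergence term (the boundary terms vanish by periodicity or decay), so that
\begin{align}
\bm{m} := \frac{\delta \ell_{ModCH2}}{\delta \bm{u}} = \Big(1-\alpha^2\nabla{\rm div}\Big)\bm{u} = Q_{op(\alpha)}\bm{u}
\,.
\label{proof-var-u}
\end{align}
Note the dimensionless version drops the overall factor of $d$ that appeared in \eqref{Lag-var-ModCH2}, consistent with the nondimensionalisation in Remark \ref{non-dim-scales}.

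Next I would vary with respect to $D$. The potential energy term is $-\frac{Fr^{-2}}{2}\int(D-b)G_{Q_{op}(\alpha)}*(D-b)\,{\rm d}^nx$. Because $Q_{op(\alpha)}$ is symmetric, its Green function $G_{Q_{op}(\alpha)}$ is a symmetric convolution kernel, so the variation of this quadratic form produces a factor of $2$ that cancels the $\frac12$, giving
\begin{align}
\frac{\delta \ell_{ModCH2}}{\delta D} = -\,Fr^{-2}\,G_{Q_{op}(\alpha)}*\big(D-b(\bs{x})\big) = -\,Fr^{-2}\big(\overline{D}-\overline{b}(\bs{x})\big)
\,,
\label{proof-var-D}
\end{align}
where the overbar denotes the smoothed quantity as in \eqref{eq-ModCH2}. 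The symmetry of the kernel is the only nontrivial point here; it is exactly what justifies treating the quadratic potential as if differentiating $\frac12 x^\top G x$.

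Finally I would substitute $\delta\ell/\delta\bm{u}=\bm{m}$ and $\delta\ell/\delta D=-Fr^{-2}(\overline{D}-\overline{b})$ into the Euler-Poincar\'e equation \eqref{EP-eqn}. The left-hand side reproduces verbatim the transport-type operator $\partial_t\bm{m}+(\bm{u}\cdot\nabla)\bm{m}+\bm{m}({\rm div}\bm{u})+m_j\nabla u^j$, while the right-hand side $D\nabla(\delta\ell/\delta D)$ becomes $-Fr^{-2}D\nabla(\overline{D}-\overline{b}(\bs{x}))$, which is precisely \eqref{eq-ModCH2-ndim}. This computation is essentially identical to the dimensional case, and I expect no genuine obstacle: the entire proof is the careful bookkeeping of replacing $g$ by $Fr^{-2}$ and $d^2/12$ by $\alpha^2$ while confirming that the $\frac12$ in the quadratic potential is cancelled by the symmetry-induced factor of $2$. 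The one step meriting a sentence of care is that symmetry of $G_{Q_{op}(\alpha)}$ (inherited from self-adjointness of $1-\alpha^2\nabla{\rm div}$) is what makes the variation clean; everything else is a direct application of the previously established proposition.
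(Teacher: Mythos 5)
Your proposal is correct and follows exactly the paper's own route: computing the two variational derivatives (your formulas agree with the paper's \eqref{Lag-var-ModCH2-ndim}) and substituting them into the Euler--Poincar\'e equation \eqref{EP-eqn}. The extra detail you supply — the integration by parts for the $({\rm div}\,\bm{u})^2$ term and the symmetry of $G_{Q_{op}(\alpha)}$ cancelling the $\tfrac12$ in the quadratic potential — is exactly the bookkeeping the paper leaves implicit.
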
 

\begin{proof}
The corresponding variational derivatives of the Lagrangian functional $\ell_{ModCH2}(\bm{u},D)$ in equation \eqref{Lag-ModCH2-ndim} are given by
\begin{align}
\bm{m} := \frac{\delta \ell_{ModCH2}}{\delta \bm{u}} 
= \Big(\bm{u} - \alpha^2\nabla {\rm div}\bm{u} \Big) 
\quad\hbox{and}\quad
\frac{\delta \ell_{ModCH2}}{\delta D} = -Fr^{-2} \big(\overline{D}-\overline{b}(\bs{x})\big)
\,.
\label{Lag-var-ModCH2-ndim}
\end{align}
The Euler-Poincar\'e equation \eqref{EP-eqn} for the variational derivatives in \eqref{Lag-var-ModCH2-ndim} yields the ModCH2 equation in \eqref{eq-ModCH2-ndim}. 
\end{proof}

\paragraph{Singular ModCH2 solutions.}
In ideal fluid dynamics, various conservation laws are expressed on advected embedded spaces such as loops and surfaces, as we have seen for example, in the case of CH2 in 2D in the previous discussion. As explained in \cite{HT2009}, ModCH2 dynamics is dominated by the emergence of weak solutions supported on advected embedded spaces. These emergent weak solutions for ModCH2 on embedded spaces in any dimension define the momentum map for the left action of the diffeomorphisms on functions taking values on the semidirect-product Lie algebra  $\mathfrak{X}(\mathbb{R}^n)\circledS V(\mathbb{R}^n)$. (This is the Lie algebra of vector fields acting on functions which take values on vector spaces $V$ over $\mathbb{R}^n$, \cite{G-BV2012,HM2005,HT2009}.%
\footnote{
In contrast, point vortices in Euler flow define a \emph{symplectic} momentum map \cite{MW1983} which also generalises to higher-order derivatives, \cite{HJ2009,CHJM2014,CEHJM2016}.} 

The singular momentum map we shall discuss here arises as part of a dual pair.%
\footnote{See \cite{W1983} for the definitive discussion of dual pairs.} 
The rigid body provides a familiar example of a dual pair. In the rigid body, the two legs of the dual pair correspond to the cotangent lift momentum maps for right and left actions, respectively. The dual pair for Euler fluids implies (from right-invariance) that the momentum map $J_R$ is conserved. For Euler fluids, the conservation of the right momentum map $J_R$ is equivalent to Kelvin's circulation theorem. For Euler fluids, the left momentum map $J_L$ maps Hamilton’s canonical equations on $T^*(SDiff)$ to their reduced Lie-Poisson form and at the same time implies that the solutions on $T^*(SDiff)$ can be defined on embedded subspaces of the domain of flow which are pushed forward by the left action of $SDiff$ \cite{HM2005}. These results for ideal incompressible Euler fluids were generalised to semidirect-product left action of $Diff$ on embedded subspaces of the domain of flow for ideal compressible fluids in \cite{HT2009}. For the fundamental proofs that these maps satisfy the technical conditions required for verifying them as dual pairs, see \cite{G-BV2012}.

In summary, for the semidirect-product case of EP(Diff$\,\circledS\,\mathcal{F}$), the weights $w_i$ for $i=1,\dots,N$ in \eqref{SDsingsoln} are considered as maps $w_i:S\to\mathbb{R}^*$. That is, the weights $w_i$ are  distributions on $S$, so that $w_i\in{\rm Den}(S)$, where ${\rm Den}:=\mathcal{F}^*$. In particular, considering the triple
\[
({\bf Q}_i,{\bf P}_i, w_i)\ \, \text{\large $\in$} \ \,
T^*{\rm Emb}(S,\mathbb{R}^n)\,\times\,{\rm Den}(S)
\,,
\]
leads to the following solution momentum map introduced in \cite{HT2009}.

\begin{theorem}[Singular solution momentum map \cite{HT2009}]
\label{singsolnmommap-thm}$\,$

\noindent
The singular solutions of the semidirect-product Lie-Poisson equations in \eqref{LP-eqn} for $\ell=\ell_{ModCH2}$ in \eqref{Lag-ModCH2} are given by
\begin{align}
\big({\bf m},D\big)=\sum_{i=1}^N\int\!\big({\bf P}_i(s,t),w_i(s)\big)\,\delta\!\left({\bf
x-Q}_i(s,t)\right)\,{\rm d}^ks
\,.
\label{sing-soln-thm}
\end{align}
The expressions for $({\bf m},D)\in\mathfrak{X}^*(\mathbb{R}^n)\ \text{\large $\circledS$}\ {\rm Den}(\mathbb{R}^n)$ in \eqref{sing-soln-thm} identify a momentum map
\begin{align}
{\bf J}:\underset{i=1}{\overset{N}{\text{\LARGE$\times$}}}\Big(T^*{\rm Emb}(S,\mathbb{R}^n)\,\times\,{\rm Den}(S)\Big)
\,\rightarrow\,
\mathfrak{X}^*(\mathbb{R}^n)\ \text{\large $\circledS$}\ {\rm Den}(\mathbb{R}^n)
\,.
\end{align}
After substituting the formulas in \eqref{sing-soln-thm} for the singular solutions into the Hamiltonian $H({\bf m},D)$ in equation \eqref{EPGosH-Ham}, the dynamical equations for the variables 
$({\bf Q}_i,{\bf P}_i, w_i)$ are given by the integral expressions
\begin{align}
\begin{split}
\partial_t{{\bf Q}_i(s,t)} 
=&\ \sum_j\int\!{\bf P}_j(s',t)\, G_1({\bf
Q}_i(s,t)-{\bf Q}_j(s',t))\ {\rm d}^ks' \,,
\\
\partial_t{{\bf P}_i(s,t)} =&\ -\sum_j\int\! {\bf P}_i(s,t)\cdot{\bf
P}_j(s',t)\,\text{\large$\nabla$}_{\!{\bf Q}_i}G_1({\bf
Q}_i(s,t)-{\bf Q}_j(s',t))\ {\rm d}^ks'
\\
&\qquad - \sum_j\int\! w_i(s)\,w_j(s')\, \text{\large$\nabla$}_{\!{\bf
Q}_i}G_2({\bf Q}_i(s,t)-{\bf Q}_j(s',t))\ {\rm d}^ks' \,,
\end{split}
\label{singsoldynamics}
\end{align}
with $\partial_t{w}_i(s)=0$, for all values of $i$.

\end{theorem}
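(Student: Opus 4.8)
The plan is to prove the theorem within the momentum-map framework of \cite{HM2005,HT2009,G-BV2012}, rather than by brute-force substitution of the distributional ansatz \eqref{sing-soln-thm} into the Lie--Poisson equations \eqref{LP-eqn}. The organising idea is that the singular solution set is the image of a cotangent-lift momentum map associated with the \emph{left} action of ${\rm Diff}(\mathbb{R}^n)$ on embeddings ${\bf Q}\in{\rm Emb}(S,\mathbb{R}^n)$ by composition, $\eta\cdot{\bf Q}=\eta\circ{\bf Q}$, together with the accompanying action on the scalar/density labels that generates the semidirect-product factor. First I would fix the canonical phase space consisting, for each $i$, of $T^*{\rm Emb}(S,\mathbb{R}^n)\times{\rm Den}(S)$, equipped with the canonical one-form $P_i={\bf P}_i\cdot{\rm d}{\bf Q}_i$ (no sum), and I would treat the weights $w_i\in{\rm Den}(S)$ as labels advected along the embeddings, carrying the trivial Poisson bracket on the ${\rm Den}(S)$ factor.

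Second, I would compute the momentum map explicitly. For a Lie-algebra element $(\xi,f)\in\mathfrak{X}(\mathbb{R}^n)\,\circledS\,V(\mathbb{R}^n)$ with $\xi=\bm\xi\cdot\nabla$, the infinitesimal generator of the composition action on ${\rm Emb}$ is $s\mapsto\bm\xi({\bf Q}(s))$, while $f$ acts on the density labels. Pairing the canonical momenta and weights against these generators yields
\begin{align}
\big\langle {\bf J}({\bf Q}_i,{\bf P}_i,w_i),(\xi,f)\big\rangle
= \sum_{i=1}^N\int_S\Big[{\bf P}_i(s)\cdot\bm\xi\big({\bf Q}_i(s)\big) + w_i(s)\,f\big({\bf Q}_i(s)\big)\Big]\,{\rm d}^ks\,.
\label{prop-Jpairing}
\end{align}
Rewriting \eqref{prop-Jpairing} as an integral over $\mathbb{R}^n$ against $(\bm\xi,f)$ by inserting $\delta({\bf x}-{\bf Q}_i(s))$ immediately reproduces the ansatz \eqref{sing-soln-thm} and identifies the target $({\bf m},D)\in\mathfrak{X}^*(\mathbb{R}^n)\,\circledS\,{\rm Den}(\mathbb{R}^n)$. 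Because \eqref{prop-Jpairing} is precisely the cotangent-lift formula for a left group action (augmented by the semidirect density term), the standard theory guarantees that ${\bf J}$ is an equivariant momentum map and hence a Poisson map from the canonical bracket to the (minus) Lie--Poisson bracket on $\mathfrak{X}^*\,\circledS\,{\rm Den}$.

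Third, I would invoke collectivisation: since ${\bf J}$ is Poisson, the canonical Hamiltonian flow of the collective Hamiltonian $\widehat H:=H\circ{\bf J}$ on $T^*{\rm Emb}(S,\mathbb{R}^n)\times{\rm Den}(S)$ maps exactly onto the Lie--Poisson flow of $H$ on $\mathfrak{X}^*\,\circledS\,{\rm Den}$, which is \eqref{LP-eqn} with $\ell=\ell_{ModCH2}$. Substituting \eqref{sing-soln-thm} into $H$ of \eqref{EPGosH-Ham} collapses the two $\delta$-function integrations and yields a finite-dimensional (in the reduced label coordinates) expression whose two terms are governed by $G_1$ and $G_2$. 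Writing out the canonical equations $\partial_t{\bf Q}_i=\delta\widehat H/\delta{\bf P}_i$ and $\partial_t{\bf P}_i=-\,\delta\widehat H/\delta{\bf Q}_i$, and using the symmetry $G_a({\bf x})=G_a(-{\bf x})$ to absorb the factors of $\tfrac12$, delivers precisely the integral system \eqref{singsoldynamics}; the conservation $\partial_t w_i=0$ follows because $w_i$ sits in the non-dynamical ${\rm Den}(S)$ factor with trivial bracket, so it enters only parametrically through the $G_2$ term of $\widehat H$.

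The main obstacle is the rigorous justification, in the infinite-dimensional setting, that \eqref{prop-Jpairing} genuinely defines an equivariant momentum map and a Poisson map onto the semidirect-product Lie--Poisson bracket. The cotangent-lift/Poisson property for the vector-field component is classical, but fitting the density component $w_i\in{\rm Den}(S)$ into the semidirect-product structure, confirming equivariance under the composition action, and controlling the distributional nature of the image $({\bf m},D)$ all require the functional-analytic care supplied by \cite{G-BV2012}; I would cite that work for the dual-pair and smoothness technicalities rather than reprove them. A self-contained alternative is direct substitution of \eqref{sing-soln-thm} into \eqref{LP-eqn}, but there the obstacle shifts to distribution-theoretic bookkeeping, namely matching the derivatives of $\delta({\bf x}-{\bf Q}_i(s))$ generated by the transport terms against test functions, which I would carry out only as a consistency check on the cleaner momentum-map argument.
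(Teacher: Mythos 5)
Your proposal is correct and takes essentially the same route as the paper: the paper does not prove the theorem internally but imports it from \cite{HT2009}, describing exactly the structure you reconstruct --- the singular solutions as the cotangent-lift momentum map for the left action of the diffeomorphisms (the left leg of a dual pair, with the functional-analytic technicalities delegated to \cite{G-BV2012}), and the dynamics \eqref{singsoldynamics} obtained by substituting \eqref{sing-soln-thm} into the Hamiltonian \eqref{EPGosH-Ham} and reading off the collective (canonical) equations with $\partial_t w_i = 0$. Your explicit pairing formula, the Poisson/collectivisation step, and the appeal to \cite{G-BV2012} for equivariance and dual-pair rigour are precisely the fleshed-out version of the derivation the paper cites.
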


The considerations discussed in \cite{HT2009,G-BV2012} derive the above singular momentum map as the left-invariant leg of a defined dual pair. However, these considerations will not be reviewed here. Instead, the next section will start a series of illustrations by numerical simulations of the dynamical behaviour of the solutions of the ModCH2 equations \eqref{eq-ModCH2} in 2D with periodic boundary conditions.

\section{Solution behaviour -- simulations}\label{sec-3}


\subsection{Background -- Euler-Poincar\'e and Lie-Poisson derivations}

\noindent
This section reports computational simulations of the interaction dynamics of wave fronts. Before embarking on our report of these computational simulations, let us place them into the context of the previous literature, which is based on approximating the Lagrangian in Hamilton's principle for fluid dynamics. Such approximations have been designed before to preserve the transport and topological properties of variational principles \cite{CHMR1998,Holm2001,Holm2015,FHT2001,FHT2002,HMR1998,HOT2009,HS2013}. 

Specifically, this section reports simulations in which the Lagrangian functional $\ell_{ModCH2}(\bm{u},D)$ in equation \eqref{Lag-ModCH2-ndim} has been augmented to complete the $H^1$ norm in the kinetic energy of the dimension-free form of the Lagrangian. Namely, we modify the Lagrangian $\ell_{ModCH2}$ in \eqref{Lag-ModCH2-ndim} to include the full H1 norm by writing,  
\begin{align}
\begin{split}
\ell_{H1ModCH2}(\bm{u},D)&=\frac{1}{2}\int  \left(|\bm{u}|^{2}
+\alpha^2\left|\nabla\bm{u}\right|^{2}\right)
\\& \qquad - Fr^{-2}\left[\big(D-b(\bs{x})\big)G_{Q^{H1}_{op}(\alpha)}*\big(D-b(\bs{x})\big) 
\right]\mathrm{d}^nx
\,.
\end{split}
\label{Lag-H1ModCH2-ndim}
\end{align}
The corresponding dimension-free form of the symmetric operator $Q^{H1}_{op}(\sigma)$ for the H1ModCH2 equation is defined as, cf. $Q_{op(\alpha)}\bm{u}$ in equation \eqref{def-Qop-ndim},
\begin{align}
Q^{H1}_{op(\alpha)}\bm{u} := \Big(1 - \alpha^2\big({\rm div}\nabla \big)\Big)\bm{u}
\,.
\label{def-Qop-ndim}
\end{align}

\begin{proposition}
The Euler-Poincar\'e equation for the dimension-free Lagrangian functional $\ell_{H1ModCH2}(\bm{u},D)$ in \eqref{Lag-H1ModCH2-ndim} yields the dimension-free form of the H1ModCH2 equation, as follows.
\begin{align}
\begin{split}
\partial_t \bm{\widetilde m} + (\bm{u}\cdot \nabla) \bm{\widetilde m}
+ \bm{\widetilde m}({\rm div} \bm{u})
+ {\widetilde m}_j \nabla u^j 
&=
-Fr^{-2}D\nabla G_{Q^{H1}_{op}(\alpha)}*\big({D}-{b}(\bs{x})\big)
\\&=:
-Fr^{-2}D\nabla\big(\widetilde{D}-\widetilde{b}(\bs{x})\big),
\end{split}
\label{eq-H1ModCH2-ndim}
\end{align}
where $\widetilde{(\,\cdot\,)}:=G_{Q^{H1}_{op}(\alpha)}*(\,\cdot\,)$ 
and ${\widetilde m}_j:={\delta \ell_{H1ModCH2}}/{\delta {u^j}}=Q^{H1}_{op(\alpha)}u_j$.
\end{proposition}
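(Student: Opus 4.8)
The plan is to follow the three earlier propositions verbatim: compute the two variational derivatives $\delta\ell_{H1ModCH2}/\delta\bm{u}$ and $\delta\ell_{H1ModCH2}/\delta D$, then substitute them, together with the continuity equation \eqref{eq-D-cont}, into the Euler-Poincar\'e equation \eqref{EP-eqn}. The only substantive departure from the ModCH2 case is the kinetic-energy variation, since the $\alpha^2({\rm div}\bm{u})^2$ term of \eqref{Lag-ModCH2-ndim} has been replaced by the full $H^1$ gradient term $\alpha^2|\nabla\bm{u}|^2$ in \eqref{Lag-H1ModCH2-ndim}.

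First I would vary the kinetic energy $\tfrac12\int(|\bm{u}|^2+\alpha^2|\nabla\bm{u}|^2)\,\mathrm{d}^nx$ against $\delta\bm{u}$. The $|\bm{u}|^2$ piece contributes $\bm{u}$, and writing the gradient term in components as $\alpha^2\,\partial_i u^j\,\partial_i u^j$ its variation is $\alpha^2\,\partial_i u^j\,\partial_i\delta u^j$; integrating by parts (the boundary terms vanishing under the periodic boundary conditions used in the simulations) transfers one derivative to give $-\alpha^2\,{\rm div}\nabla\,\bm{u}=-\alpha^2\Delta\bm{u}$ acting componentwise. Hence $\delta\ell_{H1ModCH2}/\delta\bm{u}=(1-\alpha^2\,{\rm div}\nabla)\bm{u}=Q^{H1}_{op(\alpha)}\bm{u}=:\bm{\widetilde m}$, which is precisely the operator defined in \eqref{def-Qop-ndim}. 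This is the crux of the computation: the $H^1$ norm produces the vector Helmholtz operator $1-\alpha^2\Delta$ rather than the grad-div operator $1-\alpha^2\nabla{\rm div}$ that appears in ModCH2.

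Next I would treat the potential energy, which is identical in form to that of ModCH2 but now built from the Green function $G_{Q^{H1}_{op}(\alpha)}$. Varying $-\tfrac{Fr^{-2}}{2}\int(D-b)\,G_{Q^{H1}_{op}(\alpha)}*(D-b)\,\mathrm{d}^nx$ with respect to $D$ produces two convolution terms, and combining them into a single factor requires the self-adjointness of $Q^{H1}_{op(\alpha)}=1-\alpha^2\Delta$, equivalently the symmetry $G_{Q^{H1}_{op}(\alpha)}(\bm{x}-\bm{x}')=G_{Q^{H1}_{op}(\alpha)}(\bm{x}'-\bm{x})$ of its Green function. This yields $\delta\ell_{H1ModCH2}/\delta D=-Fr^{-2}\,G_{Q^{H1}_{op}(\alpha)}*(D-b)=-Fr^{-2}(\widetilde D-\widetilde b)$, matching the notation $\widetilde{(\,\cdot\,)}:=G_{Q^{H1}_{op}(\alpha)}*(\,\cdot\,)$ introduced in the statement.

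Finally I would substitute both variational derivatives into \eqref{EP-eqn}. The left-hand side reproduces the advection operator $\partial_t\bm{\widetilde m}+(\bm{u}\cdot\nabla)\bm{\widetilde m}+\bm{\widetilde m}({\rm div}\bm{u})+\widetilde m_j\nabla u^j$ term-for-term with $\bm{m}$ replaced by $\bm{\widetilde m}$, while the right-hand side $D\nabla(\delta\ell/\delta D)$ gives $-Fr^{-2}D\nabla(\widetilde D-\widetilde b)$, which is exactly \eqref{eq-H1ModCH2-ndim}. I do not anticipate a genuine obstacle: the argument is routine and structurally parallel to the ModCH2 proposition. The only points demanding care are the sign and operator type emerging from the integration by parts of the $H^1$ term, and the observation that the Green-function symmetry invoked in the $D$-variation is the same self-adjointness that underlies the Kelvin circulation theorem in \eqref{Kel-Lie}.
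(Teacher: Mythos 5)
Your proposal is correct and follows exactly the paper's own route: compute $\delta\ell_{H1ModCH2}/\delta\bm{u}=Q^{H1}_{op(\alpha)}\bm{u}$ and $\delta\ell_{H1ModCH2}/\delta D=-Fr^{-2}(\widetilde D-\widetilde b)$, then substitute into the Euler--Poincar\'e equation \eqref{EP-eqn}. The paper merely states these variational derivatives and cites \eqref{EP-eqn} (adding only a remark that the Lie--Poisson form \eqref{LP-eqn} gives the same equation), so your filled-in details --- the integration by parts producing ${\rm div}\nabla$ rather than $\nabla{\rm div}$, and the Green-function symmetry in the $D$-variation --- are exactly the omitted steps.
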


\begin{proof}
The corresponding variational derivatives of the Lagrangian functional $\ell_{H1ModCH2}(\bm{u},D)$ in equation \eqref{Lag-H1ModCH2-ndim} in $n$ dimensions are given by
\begin{align}
\bm{\widetilde m} := \frac{\delta \ell_{H1ModCH2}}{\delta \bm{u}} 
= \Big(\bm{u} - \alpha^2 \big({\rm div}\nabla \big) \bm{u} \Big) 
\quad\hbox{and}\quad
\frac{\delta \ell_{H1ModCH2}}{\delta {\widetilde D}} 
= -Fr^{-2} \big(\widetilde{D}-\widetilde{b}(\bs{x})\big)
\,.
\label{Lag-var-H1ModCH2-ndim}
\end{align}
The Euler-Poincar\'e equation \eqref{EP-eqn} for the variational derivatives in \eqref{Lag-var-ModCH2-ndim} yields the H1ModCH2 equation in \eqref{eq-H1ModCH2-ndim}. After a Legendre transformation, the semidirect-product Lie-Poisson Hamiltonian equations in \eqref{LP-eqn} also yield the H1ModCH2 equation in \eqref{eq-H1ModCH2-ndim}, as well as the continuity equation for the depth, $D$. 
\end{proof}

\begin{remark}
The Hamiltonian for the H1ModCH2 equation in \eqref{eq-H1ModCH2-ndim} is given by 
\begin{align}
\begin{split}
h_{H1ModCH2}(\bm{\widetilde m},D)&=\frac{1}{2}\int \bm{\widetilde m}\cdot G_{Q^{H1}_{op}(d)}*\bm{\widetilde m}\
+\
g \big(D-b(\bs{x})\big)G_{Q^{H1}_{op}(d)}*\big(D-b(\bs{x})\big) \,\mathrm{d}^nx
\,,
\end{split}
\label{Ham-H1ModCH2}
\end{align}
This Hamiltonian also lies in the class the class of Hamiltonians in \eqref{EPGosH-Ham}. Consequently, the H1ModCH2 equation will admit emergent singular solutions supported on advected embedded spaces which dominate their asymptotic behaviour.

\end{remark}

\begin{remark}[Interaction dynamics of singular wave fronts with both kinetic and potential energy]$\,$

\noindent 
When gravitational potential energy is neglected, then $Fr^{-2}\to0$ in equations \eqref{eq-H1ModCH2-ndim} and \eqref{Lag-var-H1ModCH2-ndim}. The dimension-free form of the ModCH2 equation \eqref{eq-H1ModCH2-ndim} in $n$-dimensions then reduces to the $n$D Camassa-Holm equation, which was introduced in \cite{HMR1998} and studied numerically in 2D and 3D in \cite{HS2013}. For divergence-free flows, the $n$D Camassa-Holm equation is also known as the Euler-$\alpha$ model in a class of other $\alpha$-fluid models \cite{HMR1998PRL}, and it was the source of the Lagrangian-Averaged Navier-Stokes $\alpha$ model (LANS-$\alpha$ model) of divergence-free turbulence in \cite{CFHOTW1998,CFHOTW1999,FHT2001,FHT2002}. In the remainder of the present paper, we will present computational simulations of equation \eqref{eq-H1ModCH2-ndim} in 2D which include the effects of potential energy as well as the vorticity in the interaction of singular wave fronts. Consequently, the results we will obtain may be compared with computational simulations of the Camassa-Holm equation in 2D and 3D of \cite{HS2013}, in order to see the differences made in the solution behaviour due to the presence of gravitational potential energy. The 1D version of these comparisons have already been made for solutions of both the ModCH2 and H1ModCH2 equations in \cite{HOT2009}. In 1D the same run accomplishes the comparisons of the Camassa-Holm solutions with those of both ModCH2 and H1ModCH2, because in 1D the operators ${\rm div}\nabla$ and $\nabla{\rm div}$ are the same. The work here presents solutions of H1ModCH2 in 2D for comparisons with corresponding solutions of the Camassa-Holm equation in \cite{HS2013}. The comparisons of ModCH2 in 2D and 3D with corresponding solutions of the Camassa-Holm equation in \cite{HS2013} will be deferred to a later paper. In the later paper, we will also present comparisons of Camassa-Holm solutions in 2D and 3D with the corresponding solutions of EPDiff$(H_{div}$), as introduced in \cite{KSD2001}.

\end{remark}

\color{black}

\subsection{Simulations of emergent H1ModCH2 solutions}

In the rest of this section, we consider computational simulations of the H1ModCH2 equation \eqref{eq-H1ModCH2-ndim} dynamics in 2D. We will present five initial conditions in the paper and ten initial conditions in the supplementary materials. For each initial condition, we consider the dynamical exchange between kinetic and potential energy. This will be illustrated by starting with only kinetic energy with zero initial elevation in sections \ref{ssec: plate}, \ref{ssec: skew} and \ref{ssec: wedge}, or with only potential energy in \ref{ssec: single dam break} and \ref{ssec: dual dam break}. Each of five these initial conditions is simulated for two values of the interaction range, $\alpha$, relative to the characteristic width of the initial condition, $w_0$, which all of the simulations have in common. The two values of $\alpha$ selected in these simulations are given by $\alpha = w_0$ and $8\alpha = w_0$. For each of these five initial conditions and the two chosen values of $\alpha$, two figures with six panels are presented corresponding to the evolution of $|u|^2$ and $(\overline{D} - \overline{b}(\bs{x}))$, displayed as $6$ snapshots. The quantity $(\overline{D} - \overline{b}(\bs{x})) := G_{Q^{H1}_{op}(\alpha)}*({D}-{b}(\bs{x}))$ is the elevation, smoothed by convolution with the Green function $G_{Q^{H1}_{op}(\alpha)}$ defined in \eqref{eq-H1ModCH2-ndim}. For convenience, in this section we will refer to the smoothed difference $(\overline{D} - \overline{b}(\bs{x}))$ as the \emph{elevation}.

The first panel (top left panel) in each figure corresponds to the initial condition. The subsequent panels, reading across the first row and then across the second row, are snapshots at the subsequent times. The domain is $[0,2\pi]\times [0,2\pi]$ with doubly periodic boundary conditions. Coordinates are $x$ horizontally and $y$ vertically. We use the colour map shown in figure \ref{fig:velocity wave cmap} for the L2 norm of the velocity, $|u|^2$, where the minimal values and maximal values appear grey and white respectively. This is the same approach used in \cite{HS2013} where the black colour at $12.5\%$ intensity exists to show the outlines of spatially confined velocity segments. While the colour map \ref{fig:velocity wave cmap} is apt at showing small scale features for positive definite fields, it is not suitable to plot figures that take on negative values. Thus, the elevation figures will use the standard colour map \emph{turbo}. In each figure, the colour map is determined for each panel separately, so that the features of each snapshots are visible. The scales of the 2D plots are included in each figure alongside the colour bar such that the variation of the intensity across each panel is clear.
Four 1D slices of the domain are included in each snapshot in the directions shown in figure \ref{fig:1d profile}. Specifically, the solid black line is the profile along the horizontal $y = \pi$, the dashed red line is along the vertical $x = \pi$, the solid green line is along the upward diagonal $y = x - \pi$, and the dashed blue line is along the downward diagonal $y = \pi-x$. Similarly to the 2D snapshots, the scales of the 1D plots are also determined per panel for maximal clarity.
\begin{figure}[H]
    \centering
    \includegraphics[width=\textwidth, height=0.05\textwidth]{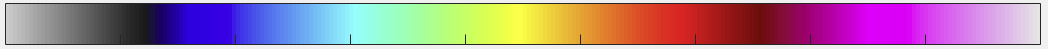}
    \caption{Custom colour map for the upcoming $|u|^2$ plots.}
    \label{fig:velocity wave cmap}
\end{figure}
\begin{figure}[H]
    \centering
    \includegraphics[width=0.3\textwidth, height=0.3\textwidth]{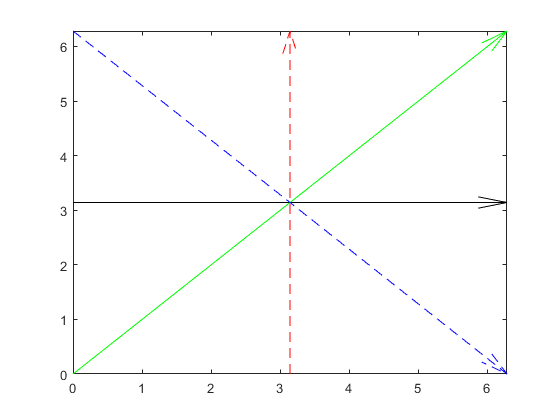}
    \caption{Locations of the 1D profiles of $|\bs{u}|$ and $\overline{D}$ in the 2D numerical simulations.}
    \label{fig:1d profile}
\end{figure}
The simulations apply a Fourier spectral method with $2048$ Fourier nodes in both the $x$ and $y$ directions. De-aliasing is accomplished by truncating the highest $\frac{1}{3}$ wave numbers. The time stepping method is the $5^{th}$ order Runge Kutta Fehlberg (RKF45) method with adaptive time stepping corresponding to the well known formula for step control,
\begin{align}
    h_i = \gamma h_{i-1}\left(\frac{\epsilon|h_{i-1}|}{|| \Bar{u}_i - \Hat{u}_i ||}\right)^{1/p}. \label{eq:stepsize}
\end{align}
In \eqref{eq:stepsize} we denote, as follows. At step $i$, we have the fourth order solution $\Bar{u}_i$ and fifth order solution $\Hat{u}_i$ as well as the previous time step $h_{i-1}$. The value $p = 4$ is the order of the solution $\Bar{u}_i$ and the order of $\hat{u}_i$ is $p+1$. If the L2 norm of $\Bar{u}_i - \Hat{u}_i$ is less than the tolerance $\epsilon$, the step size for the next time step is derived from \eqref{eq:stepsize}. If the $||\Bar{u}_i - \Hat{u}_i|| > \epsilon$, the current step is repeated with the step size derived from \eqref{eq:stepsize}. The relative tolerance and safety factor used in this work are $\epsilon = 10^{-5}$ and $\gamma = 0.9$ respectively.
\newpage 

\subsubsection{Plate}\label{ssec: plate}
In figures \ref{fig:snapshot plate u 1}-\ref{fig:snapshot plate rhobar 4}, we consider the combined dynamics of the velocity magnitude $|\bs{u}|$ and elevation $(\overline{D} - \overline{b}(\bs{x}))$ in the interplay of kinetic and potential energy for different values of $\alpha$, starting from the same initial conditions in a doubly periodic square domain with a flat bottom topography, so $\overline{b}(\bs{x})=const$. The Plate initial condition is inspired by the two SAR images in figure \ref{fig: Gibralter}. The first of these two SAR images shows the surface signature of an internal wave propagating midway through the Gibralter Strait. The second SAR image shows the train of wavefront surface signatures which develops after the internal wave has propagated into the open Mediterranean Sea. Initially, the momentum $\bs{m}$ shown in the first panel of the Plate figure is distributed along a line segment whose corresponding velocity falls off exponentially as $e^{-|x|/w_0}$ at either end of the segment and also in the transverse direction. Thus, the transverse slice of the fluid velocity profile shown in the rectangular strip below the panel as a black curve has a contact discontinuity, i.e., a jump in its derivative. The name ``Plate'' also refers to the corresponding case for the 2D CH dynamics simulated in \cite{HS2013}. The advected depth variable $D$ is initially at rest and the elevation is flat, so $\overline{D}(\bs{x},0)=const$. 

Figure \ref{fig:snapshot plate u 1} shows snapshots of the velocity profile of the initially rightward moving line segment. The support of the velocity solution develops a curvature and ``balloons'' outward as it moves rightward. It also stretches because the endpoints of its profile are fixed by the imposed exponential fall-off of velocity there. The shapes of the velocity profiles in the transverse direction of travel are shown by the 1D plots beneath the 2D snapshots. The bottom panels of \ref{fig:snapshot plate u 1} show the smoothing of the initial contact curves. Figure \ref{fig:snapshot plate rhobar 1} shows the snapshot of elevation $(\overline{D}(\bs{x},t) - \overline{b}(\bs{x}))$ accompanying the evolution of velocity in figure \ref{fig:snapshot plate u 1}. Note that the moving peak in elevation is accompanied by a trailing depression. This happens because of conservation of total mass. Namely, mass conservation implies that the moving surface elevation of an initially flat elevation profile must be accompanied by a corresponding moving depression of the elevation. The peak of the elevation follows the motion of the velocity profile. However, the profiles of velocity and elevation do not develop the same shape, because of the trailing depression below the mean elevation. The region of depression formed behind the peak extends from the initial position of the velocity profile to the tail of the current velocity profile. 

\paragraph{Wavefront emergence}
When $\alpha < w_0$, in figure \ref{fig:snapshot plate u 4}, the unstable initial velocity profile produces a train of peakon segments emerging as the initial profile breaks up. Each of the emergent wavefronts is curved because it velocity vanishes at the initial endpoints. The number of wavefronts depends on the size of $\alpha$. In figures \ref{fig:snapshot plate u 4}-\ref{fig:snapshot plate rhobar 4}, the first emitted velocity wavefronts have the highest velocity and subsequent wavefronts have lower velocity. Consequently, they will not overtake each other and a wave train will be formed. The material peaks travelling along with the velocity profiles also have the feature that the first peak is the highest and all subsequent peaks are lower. The depression region is now bounded by the location of initial velocity profile and the arc defined by the slowest emitted wavefront. The process of velocity wavefront emergence takes time to complete. This is shown in the last panel in figure \ref{fig:snapshot plate u 4}, where the initial condition has evolved into $6$ fully formed segments ahead of ramps. As time progresses further, the ramps will develop into a train of wavefront segments.

Figure \ref{fig:snapshot plate rhobar 4} shows the elevation associated to the velocity in figure \ref{fig:snapshot plate u 4}. Panel $1$ of figure \ref{fig:snapshot plate rhobar 4} shows the initially flat elevation. Panel 2 shows the early development of a wavetrain of positive elevation. As expected, the leading wave is the tallest. In panels $2$, $3$ and $4$ one sees the rightward propagation of mass as the wave train moves away from the initial rightward impulse. In the subsequent panels one sees the continued development of a leftward-moving depression of the surface due to the emission of the rightward moving wave train of positive elevation. The grey rectangular strips below the panels show details of the wave-forms along the colour-coded directions in figure \ref{fig:1d profile}. 
Note, however, that the elevation of the surface between the successive wavefronts in the wave train is \emph{less} that the initial level of the fluid at rest. Indeed, the depression is developing a counterflow in the opposite direction which might eventually cause a large scale oscillation in the wake of the plate. Comparing the properties of the fastest wavefronts for different values of $\alpha$, we see that both the material and velocity wavefronts are higher for smaller values of $\alpha$. This is due to conservation of mass and energy, in which $\alpha$ controls the width of the wave profile. 

\foreach \y in {1,4}{
    	\pgfmathtruncatemacro{\alphaValue}{2^(\y -1 )}
    \begin{figure}[H]
    	\centering
        \foreach \x in {0,1,...,5}{
    	    \begin{subfigure}[b]{0.3\textwidth}
    			\centering
    			\includegraphics[width=\textwidth, height=\textwidth]{figures/one_plate/\y/\x_usqr_comp.jpg}
    		\end{subfigure}
    	}   
	\ifnum \alphaValue = 1
       	\caption{\small Evolution of $|u|^2$ with the ``plate'' initial condition with $\alpha = w_0$} 
       	\else
       	\caption{\small Evolution of $|u|^2$ with the ``plate'' initial condition with $\alpha = w_0/\alphaValue$} 
       	\fi
    	\label{fig:snapshot plate u \y}
    \end{figure}
    
    \begin{figure}[H]
    	\centering
        \foreach \x in {0,1,...,5}{
    	    \begin{subfigure}[b]{0.3\textwidth}
    			\centering
    			\includegraphics[width=\textwidth, height=\textwidth]{figures/one_plate/\y/\x_elev_comp.jpg}
    		\end{subfigure}
    	} 
    	\ifnum \alphaValue = 1
       	\caption{\small Evolution of $\overline{D}-\overline{b}$ with the ``plate'' initial condition with $\alpha = w_0$. } 
       	\else
       	\caption{\small Evolution of $\overline{D}-\overline{b}$ with the ``plate'' initial condition with $\alpha = w_0/\alphaValue$. } 
       	\fi
    	\label{fig:snapshot plate rhobar \y}
    \end{figure}
}
\newpage 

\subsubsection{Skew}\label{ssec: skew}
Skew flows in figures \ref{fig:snapshot skew u 4} and \ref{fig:snapshot skew rhobar 4} are initiated with two peakon segments of the same width and with constant elevation. The peakon segment located at the back has $1.5$ times the amplitude of the peakon segment moving horizontally. Thus, the waves emerging from the back peakon will overtake the waves emerging from the peakon moving to the right by moving along the negative diagonal. Panel $2$ of figure \ref{fig:snapshot skew u 4} shows the result of collisions of the first emitted curved velocity segments. Here, both overtaking and head-on collisions have occurred along different axes and the resulting non-linear transfer of momentum has resulted in the merging, or \emph{reconnection}, of the wave segments. The collision has also produced a \emph{hotspot} of momentum and elevation located at the intersection point. This hotspot expands rapidly outward to form the red region of the rightmost wavefront in panel $3$. The appearance of hotspots during the reconnection of wavefronts is also seen in the dynamics of doubly periodic solutions of the  Kadomtsev–Petviashvili equation \cite{CK2009} and also observed, for example, in a famous photograph of crossing swells in the Atlantic Ocean \cite{S2012}. 

The notion of Lagrangian \emph{memory wisps} introduced in \cite{HS2013} is particularly visible in panel $3$ of the elevation evolution in figure \ref{fig:snapshot skew rhobar 4}, where two wisps can be found connecting the boundaries of the expanding hotspot to edge points of elevation segments. By examining intermediate snapshots, we see that the initial memory wisp connects from the hotspot to the edge of the elevation segment after the collision travelling downwards. Via hotspot expansion and the emission of additional wavefronts from the inital conditions, the wisp splits into two and connects to different elevation segments. In panel $4$ to $6$, we see the same interaction of subsequently emitted wavefronts with multiple collisions and reconnections. In each of the collisions, memory wisps are produced as seen between resultant wavefronts which suggests the hotspots are part of the mechanism creating the wisps. We note that the persistent memory wisps in panel $6$ between the most and second most rightward elevation segments are the same memory wisps seen in panel $3$. This suggests that the memory wisps are not produced by the numerical method, instead they are products of the wavefront collisions which preserve the reversibility of the evolution.
\newpage 

\foreach \y in {1,4}{
    	\pgfmathtruncatemacro{\alphaValue}{2^(\y -1 )}
\begin{figure}[H]
	\centering
    \foreach \x in {0,1,...,5}{
	    \begin{subfigure}[b]{0.3\textwidth}
			\centering
			\includegraphics[width=\textwidth, height=\textwidth]{figures/skew/\y/\x_usqr_comp.jpg}
		\end{subfigure}
	}  
	\ifnum \alphaValue = 1
       	\caption{\small Evolution of $|u|^2$ with the ``skew'' initial condition with $\alpha = w_0$.} 
       	\else
   	\caption{\small Evolution of $|u|^2$ with the ``skew'' initial condition with $\alpha = w_0/8$.} 
   	\fi
	\label{fig:snapshot skew u \y}
\end{figure}

\begin{figure}[H]
	\centering
    \foreach \x in {0,1,...,5}{
	    \begin{subfigure}[b]{0.3\textwidth}
			\centering
			\includegraphics[width=\textwidth, height=\textwidth]{figures/skew/\y/\x_elev_comp.jpg}
		\end{subfigure}
	} 	
	\ifnum \alphaValue = 1
       	\caption{\small Evolution of $\overline{D}-\overline{b}$ with the ``skew'' initial condition with $\alpha = w_0$.} 
       	\else
   	\caption{\small Evolution of $\overline{D}-\overline{b}$ with the ``skew'' initial condition with $\alpha = w_0/8$.} 
   	\fi
	\label{fig:snapshot skew rhobar \y}
\end{figure}
}

\newpage

\subsubsection{Wedge} \label{ssec: wedge} 
The ``wedge'' initial condition is a modification of the skew collision in which the initial upper peakon segment travels downward along the negative diagonal and the initial lower peakon segment travels upward along the positive diagonal. The magnitudes of the velocities are the same and there is a reflection symmetry along the horizontal axis in the middle of the domain, along $y = \pi$. When the emergent wavefronts meet along $y=\pi$, their vertical momentum components collide in opposite directions (head-on). The ``wedge'' initial condition can be seen on the left of the lower panels of figure \ref{fig:snapshot wedge u 1}, emerging from the line $y=\pi$. In panel $3$ of figure \ref{fig:snapshot wedge u 1}, the collision of the velocity segments forms a hotspot along the mid-line, which expands outward away during the reconnection process near the center of panel $4$. As these hotspots expand further in the next panels, they leave behind memory wisps in the velocity which are visible in panel $6$. These memory wisps are not seen, though, in the snapshots of elevation in figure \ref{fig:snapshot wedge rhobar 1}, as they are obscured near the boundaries between the depression regions and the elevation of the material wave segments.

For $w_0 = 8\alpha$ in figures \ref{fig:snapshot wedge u 4} and \ref{fig:snapshot wedge rhobar 4}, multiple ``wedge'' collisions occur from the wave train emerging from the initial conditions. They also interact among wavefronts from the same wave train due to the elastic collision property. This interaction produces fast, small-scale oscillations which resemble the emergent wavefronts, broken into even shorter ``shards'' seen in panel $3$ to $6$ of figure \ref{fig:snapshot wedge u 4} and in \ref{fig:snapshot wedge rhobar 4}. These broken shards of wave segments arise when the numerical method can no longer resolve the smallest scale behaviour. Lowering the values of $\alpha$ narrows both the velocity and elevation wave segments. It also has the effect of highlighting the presence of memory wisps, as more collisions occur with higher transfer of momentum and thus greater separation of the wavefronts, as seen in the panel $6$ of figure \ref{fig:snapshot wedge rhobar 4}. 

Other aspects of head-on collisions will be discussed next in section \ref{ssec: parallel} for the ``parallel'' initial conditions. 

\newpage 

\foreach \y in {1,4}{
    	\pgfmathtruncatemacro{\alphaValue}{2^(\y -1 )}
    \begin{figure}[H]
    	\centering
        \foreach \x in {0,1,...,5}{
    	    \begin{subfigure}[b]{0.3\textwidth}
    			\centering
    			\includegraphics[width=\textwidth, height=\textwidth]{figures/wedge/\y/\x_usqr_comp.jpg}
    		\end{subfigure}
    	}   
	\ifnum \alphaValue = 1
       	\caption{\small Evolution of $|u|^2$ with the ``wedge'' initial condition with $\alpha = w_0$.} 
       	\else
       	\caption{\small Evolution of $|u|^2$ with the ``wedge'' initial condition with $\alphaValue\alpha = w_0$.} 
       	\fi
    	\label{fig:snapshot wedge u \y}
    \end{figure}
    
    \begin{figure}[H]
    	\centering
        \foreach \x in {0,1,...,5}{
    	    \begin{subfigure}[b]{0.3\textwidth}
    			\centering
    			\includegraphics[width=\textwidth, height=\textwidth]{figures/wedge/\y/\x_elev_comp.jpg}
    		\end{subfigure}
    	} 
    	\ifnum \alphaValue = 1
       	\caption{\small Evolution of $\overline{D}-\overline{b}$ with the ``wedge'' initial condition with $\alpha = w_0$.} 
       	\else
       	\caption{\small Evolution of $\overline{D}-\overline{b}$ with the ``wedge'' initial condition with $\alphaValue\alpha = w_0/$.} 
       	\fi
    	\label{fig:snapshot wedge rhobar \y}
    \end{figure}
}


\newpage 

\subsubsection{Parallel} \label{ssec: parallel} 
The initial condition for the ``parallel'' collision comprises two peakon segments of equal and opposite magnitudes moving toward each other along vertically offset parallel horizontal lines, as shown in figure \ref{fig:snapshot parallel u 1}. This situation differs from the overtaking (rear-end) collisions seen in the ``skew'' initial conditions, as the collisions are head-on; so they involve wavefronts with positive and negative velocity components. In 1D, when the wavefronts are peakons, no vertical offset can occur and an antisymmetric initial condition on the real line produces a collision in which the two weak solutions bounce off each other elastically in opposite directions. In the 2D case, the offset initial condition introduces angular momentum into the system. Consequently, the offset head-on collision can access angular degrees of freedom and thus it will show more complex behaviour than the head-on collision in 1D.

Consider the case where $\alpha = w_0$ in figure \ref{fig:snapshot parallel u 1}. The initial velocity segments balloon outwards and the shape is smoothed as occurs in the ``plate'' condition. When the wavefronts collide in panel $3$, the magnitude of velocity along the collision front vanishes and the velocity profile becomes very steep as seen also in 1D peakon collisions. In panel $4$, we see that the wavefront segments which did not undergo head-on collisions contain hotspots. The hotspots indicate where reconnections have occurred. These hotspots expand in panels $5$ and $6$ into a velocity profile which balloons outwards with an angle away from the vertical axes. The results of the head-on collisions are the dark segments connecting the upper and lower velocity wavefronts. The scattering angle seen clearly in the third panel of figure \ref{fig:snapshot parallel rhobar 1} is due to the conservation of angular momentum during the offset head-on collision.

Figure \ref{fig:snapshot parallel rhobar 1} shows snapshots of the elevation during the offset head-on collision. As the elevation segments are advected with the velocity profile, we see an elevation head-on collision in panel $3$. In contrast with the velocity profile where the velocity is tends to zero along the collision front, the elevations are rising in the collision to create a elevation segment of large amplitude. This reinforced elevation then decreases in height in panel $4$ to $6$ as the elevation wavefront emerges from the head-on collisions. This is clearest from the black 1D profile in the grey rectangular strip below panel $6$.

When $\alpha < w_0$, the evolution becomes even more complex because entire trains of wavefronts are involved, as seen in figures \ref{fig:snapshot parallel u 4} and \ref{fig:snapshot parallel rhobar 4}. In these figures, one see the reconnections of velocity and elevation segments which had undergone head-on collisions with those segments that had not collided. The complexity builds, as the head-on collisions and reconnections recur again and again, as additional wavefronts continue to be emitted from the initial conditions.

\newpage

\foreach \y in {1,4}{
    	\pgfmathtruncatemacro{\alphaValue}{2^(\y -1 )}
    \begin{figure}[H]
    	\centering
        \foreach \x in {0,1,...,5}{
    	    \begin{subfigure}[b]{0.3\textwidth}
    			\centering
    			\includegraphics[width=\textwidth, height=\textwidth]{figures/parallel/\y/\x_usqr_comp.jpg}
    		\end{subfigure}
    	}   
	\ifnum \alphaValue = 1
       	\caption{\small Evolution of $|u|^2$ with the ``parallel'' initial condition with $\alpha = w_0$.}
       	\else
       	\caption{\small Evolution of $|u|^2$ with the ``parallel'' initial condition with $\alphaValue\alpha = w_0$.} 
       	\fi
    	\label{fig:snapshot parallel u \y}
    \end{figure}
    
    \begin{figure}[H]
    	\centering
        \foreach \x in {0,1,...,5}{
    	    \begin{subfigure}[b]{0.3\textwidth}
    			\centering
    			\includegraphics[width=\textwidth, height=\textwidth]{figures/parallel/\y/\x_elev_comp.jpg}
    		\end{subfigure}
    	} 
    	\ifnum \alphaValue = 1
       	\caption{\small Evolution of $\overline{D}-\overline{b}$ with the ``parallel'' initial condition with $\alpha = w_0$.} 
       	\else
       	\caption{\small Evolution of $\overline{D}-\overline{b}$ with the ``parallel'' initial condition with $\alphaValue\alpha = w_0$.} 
       	\fi
    	\label{fig:snapshot parallel rhobar \y}
    \end{figure}
}


\newpage 

\subsubsection{Dam Break}\label{ssec: single dam break}
A Dam Break, or Lock Release, flow is produced when at time $t=0$ a volume of fluid at rest behind a dam, or lock, is suddenly released. Gravity then drives the flow, as potential energy is converted in kinetic energy. Here, we treat the case of a radially symmetric Gaussian distribution of initial depth 
\[\overline{D} = 2e^{-((x-\pi)^2 + (y-\pi)^2)/{w_0}} + b(\bs{x}),\] with constant, non zero bathymetry, $b(\bs{x}) = const > 0$. This corresponds to the case where the radially symmetric Gaussian distribution of initial surface elevation is released into a fluid at rest with a flat surface over a constant bathymetry. Consider the elevation profile $\overline{D}-\overline{b}$ in figure \ref{fig:snapshot dam break rhobar 1}. The first panel shows the initial condition for the elevation, while the second panel show the plateauing of the elevation peak in lowering of its initial Gaussian profile which, in turn, becomes wider as mass is pushed outward radially by gravity. When the critical width $\alpha$ of the expansion is reached, then a wavefront is emitted radially outwards on the left and right hand sides of the domain, as seen in panel $3$ and $4$. We note that the elevation becomes negative behind the formation of the material wavefront and it becomes more negative in subsequent panels after panel $3$. The leading edge of the wavefront subsides exponentially as it evolves, as does the shape of the leading edge of the velocity wavefront in figure \ref{fig:snapshot dam break u 1}. In the velocity profile, we see that the emerging wavefront takes the peakon form in the third panel. As the system evolves, the leading edge of the velocity wavefront is similar in shape to the material wave profiles in panels three, four and five.

For smaller values of $\alpha$, a train of velocity and material wavefronts rapidly develops, as seen in figures \ref{fig:snapshot dam break u 4} - \ref{fig:snapshot dam break rhobar 4} where $w_0 = 8\alpha$. Similarly to the ``plate'' initial condition, the first wavefront has the highest velocity and elevation, while subsequent wavefronts have lower velocity and elevation. The elevation ahead of the front of the first wavefront remains flat, but as the expansion continues the level of the fluid surface drops behind the expanding wave train. If one looks closely, one sees that the level of the surface between the wavefronts in the wave train is lower that the initial level at rest. Perhaps this would eventually produce a counter flow. 

Now we consider a variation of the Dam break initial condition that does produce persistent velocity peakon wavefronts. The initial conditions for the figures \ref{fig:snapshot dam break no bathymatry rhobar 1} - \ref{fig:snapshot dam break no bathymatry u 4} are \[\overline{D} = 2e^{-((x-\pi)^2 + (y-\pi)^2)/{w_0}} + b(\bs{x}),\] with $b(\bs{x}) = 0$ and $\bs{u} = 0$. This corresponds to the case where the initial surface elevation is released into the constant bathymetry. Comparing the velocity profile for the same value of $\alpha$ in figure \ref{fig:snapshot dam break no bathymatry u 1} and \ref{fig:snapshot dam break u 1}, we see that the evolution are very similar for the first three panels as the development of peakon wavefronts. However, we see in the bottom three panels of figure \ref{fig:snapshot dam break no bathymatry u 1} that the peakons persist and travel outwards radially in a wave train with decreasing amplitude. The elevation profile in figure \ref{fig:snapshot dam break no bathymatry rhobar 1} also start with the plateauing and lowering of the initial Gaussian elevation in panel two. In panel three, one sees the start of the formation of a material wavefront. Instead of the wavefront being formed and emitted like the velocity wavefront, the material wavefront travels outwards, loses amplitude and vanishes when it reaches the edge of the elevation distribution. The width of the elevation distribution is widened in this process, as seen in the bottom three panels. 

Similarly, for smaller values of $\alpha$, a train of velocity and material wavefronts appears from the initial condition, as seen in figures \ref{fig:snapshot dam break no bathymatry rhobar 4} - \ref{fig:snapshot dam break no bathymatry u 4} where $w_0 = 8\alpha$. The process of formation and annihilation of material peakons persists and the shape of the wavefront reassemble the peakon shape for more subsequent waves in the wave train. Since the elevation is not negative in this flow, no counter flow would be produced. 

From these two variations of the Dam Break problem we see that from a smooth, spatial confined initial conditions, the time varies for persistent, peakon-shaped wavefronts to develop. However, the issue of the time of formation of peakon wave profiles is beyond the scope addressed in the present paper. 
\foreach \y in {1,4}{
    	\pgfmathtruncatemacro{\alphaValue}{2^(\y -1 )}
    \begin{figure}[H]
    	\centering
        \foreach \x in {0,1,...,5}{
    	    \begin{subfigure}[b]{0.3\textwidth}
    			\centering
    			\includegraphics[width=\textwidth, height=\textwidth]{figures/dam break/\y/\x_elev_comp.jpg}
    		\end{subfigure}
    	} 
    	\ifnum \alphaValue = 1
       	\caption{\small Evolution of $\overline{D}-\overline{b}$ with the ``dam break'' initial condition with $\alpha = w_0$.} 
       	\else
       	\caption{\small Evolution of $\overline{D}-\overline{b}$ with the ``dam break'' initial condition with $\alphaValue\alpha = w_0$.} 
       	\fi
    	\label{fig:snapshot dam break rhobar \y}
    \end{figure}
    
    \begin{figure}[H]
    	\centering
        \foreach \x in {0,1,...,5}{
    	    \begin{subfigure}[b]{0.3\textwidth}
    			\centering
    			\includegraphics[width=\textwidth, height=\textwidth]{figures/dam break/\y/\x_usqr_comp.jpg}
    		\end{subfigure}
    	}   
	\ifnum \alphaValue = 1
       	\caption{\small Evolution of $|u|^2$ with the ``dam break'' initial condition with $\alpha = w_0$.} 
       	\else
       	\caption{\small Evolution of $|u|^2$ with the ``dam break'' initial condition with $\alphaValue\alpha = w_0$.} 
       	\fi
    	\label{fig:snapshot dam break u \y}
    \end{figure}
}

\begin{figure}[H]
	\centering
    \foreach \x in {0,1,...,5}{
	    \begin{subfigure}[b]{0.3\textwidth}
			\centering
			\includegraphics[width=\textwidth, height=\textwidth]{figures/dam break/1_0/\x_elev_comp.jpg}
		\end{subfigure}
	} 
   	\caption{\small Evolution of $\overline{D}-\overline{b}$ with the ``dam break'' initial condition with $\alpha = w_0$ and zero bathymetry.} 
	\label{fig:snapshot dam break no bathymatry rhobar 1}
\end{figure}
\begin{figure}[H]
	\centering
    \foreach \x in {0,1,...,5}{
	    \begin{subfigure}[b]{0.3\textwidth}
			\centering
			\includegraphics[width=\textwidth, height=\textwidth]{figures/dam break/1_0/\x_usqr_comp.jpg}
		\end{subfigure}
	}   
   	\caption{\small Evolution of $|u|^2$ with the ``dam break'' initial condition with $\alpha = w_0$  and zero bathymetry.} 
	\label{fig:snapshot dam break no bathymatry u 1}
\end{figure}

\begin{figure}[H]
	\centering
    \foreach \x in {0,1,...,5}{
	    \begin{subfigure}[b]{0.3\textwidth}
			\centering
			\includegraphics[width=\textwidth, height=\textwidth]{figures/dam break/4_0/\x_elev_comp.jpg}
		\end{subfigure}
	} 
   	\caption{\small Evolution of $\overline{D}-\overline{b}$ with the ``dam break'' initial condition with $8\alpha = w_0$.} 
	\label{fig:snapshot dam break no bathymatry rhobar 4}
\end{figure}

\begin{figure}[H]
	\centering
    \foreach \x in {0,1,...,5}{
	    \begin{subfigure}[b]{0.3\textwidth}
			\centering
			\includegraphics[width=\textwidth, height=\textwidth]{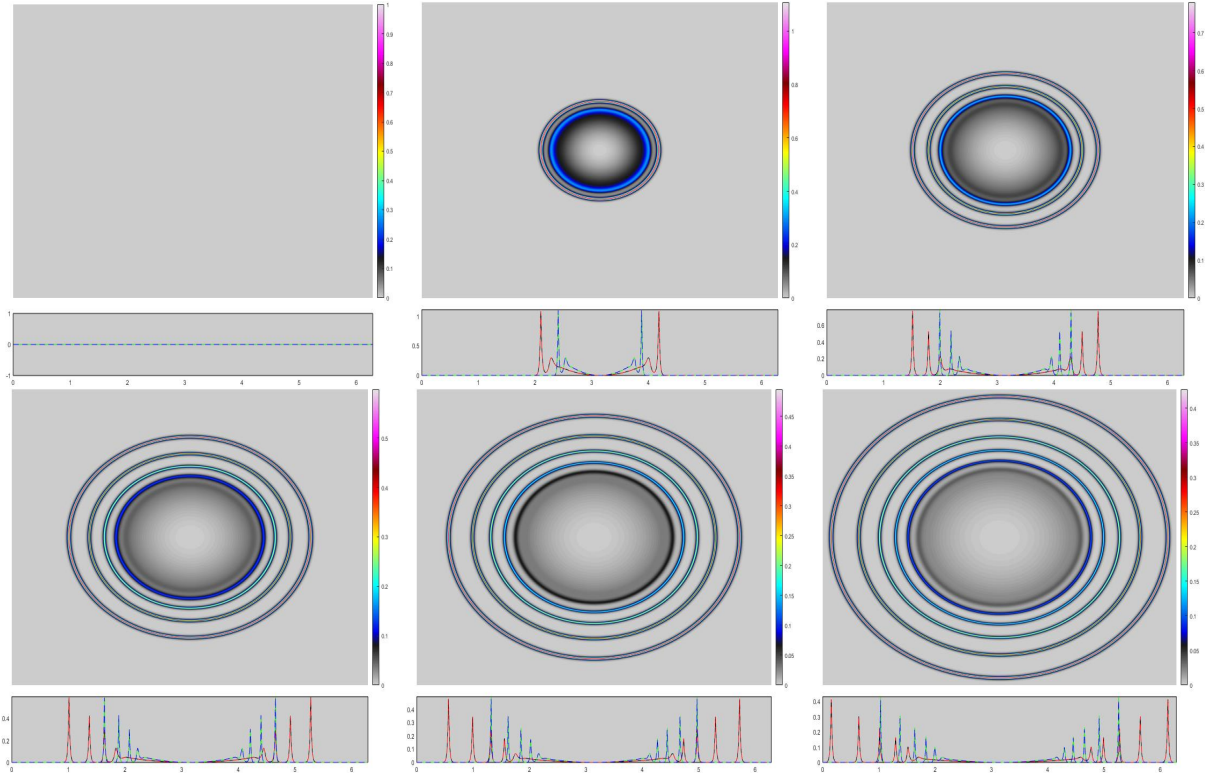}
		\end{subfigure}
	}   
   	\caption{\small Evolution of $|u|^2$ with the ``dam break'' initial condition with $8\alpha = w_0$.} 
	\label{fig:snapshot dam break no bathymatry u 4}
\end{figure}
\clearpage

\subsubsection{Dual Dam Break} \label{ssec: dual dam break}
Here, we treat the Dam Break flow in which the initial condition contains two radially symmetric Gaussian distributions of initial surface elevation. These are simultaneously released into a fluid at rest with a flat surface over a constant bathymetry. To study the interaction of both velocity and elevation wavefronts, we consider the case where the bathymetry $b > 0$. The emergence property of wavefronts remains the same of the single Dam Break in section \ref{ssec: single dam break}. In figure \ref{fig:snapshot dual dam break rhobar 1}, panel $1$ is the initial condition and panel $2$ is the emergence of elevation wavefronts. In the middle of the panel $3$, one sees the head-on collisions of these emergent wavefronts. In the center of the domain in panel $4$ to $6$, one sees the head-on collisions of the emitted radial peakons and their reconnections in the form of two rapidly expanding hotspots located along $x = \pi$. As one part of the elevation wavefront expands radially away from the center of the domain, it leaves a widening region of depression behind it which creates a counter-flow, which one sees developing in panel $6$ as the dark purple region.
The corresponding velocity profile in figure \ref{fig:snapshot dual dam break u 1} evolves similarly to the ``Parallel'' and ``Single Dam Break'' flows for head-on collisions and emergence of wavefronts respectively.   

For smaller values of $\alpha$, a train of peakon wavefronts rapidly ensues, as seen in figures \ref{fig:snapshot dual dam break u 4} - \ref{fig:snapshot dual dam break rhobar 4} where $w_0 = 8\alpha$. Consider the interaction in the centre of the domain after the initial head-on collision of the first wave in the emergent wave train in panel $3$. Panel $4$ shows the ``rebound'' wavefront interacting with the subsequent wavefront in the wave train to create hotspots above and below $x = \pi$. This process repeats for every wavefront and creates a checkboard pattern in the region above and below $x = \pi$. The connection between wavefronts are the memory wisps, seen in panel $5$ and $6$. Thus, this interaction produces a cellular elevation profile which is locally similar to that of the doubly periodic cnoidal waves seen in solutions of the Kadomtsev–Petviashvili equation \cite{S2012,CK2009}. 

\newpage 

\foreach \y in {1,4}{
    	\pgfmathtruncatemacro{\alphaValue}{2^(\y -1 )}
    \begin{figure}[H]
    	\centering
        \foreach \x in {0,1,...,5}{
    	    \begin{subfigure}[b]{0.3\textwidth}
    			\centering
    			\includegraphics[width=\textwidth, height=\textwidth]{figures/dual dam break/\y/\x_usqr_comp.jpg}
    		\end{subfigure}
    	}   
	\ifnum \alphaValue = 1
       	\caption{\small Evolution of $|u|^2$ with the ``dual dam break'' initial condition with $\alpha = w_0$.} 
       	\else
       	\caption{\small Evolution of $|u|^2$ with the ``dual dam break'' initial condition with $\alphaValue\alpha = w_0$.} 
       	\fi
    	\label{fig:snapshot dual dam break u \y}
    \end{figure}
    
    \begin{figure}[H]
    	\centering
        \foreach \x in {0,1,...,5}{
    	    \begin{subfigure}[b]{0.3\textwidth}
    			\centering
    			\includegraphics[width=\textwidth, height=\textwidth]{figures/dual dam break/\y/\x_elev_comp.jpg}
    		\end{subfigure}
    	} 
    	\ifnum \alphaValue = 1
       	\caption{\small Evolution of $\overline{D}-\overline{b}$ with the ``dual dam break'' initial condition with $\alpha = w_0$.} 
       	\else
       	\caption{\small Evolution of $\overline{D}-\overline{b}$ with the ``dual dam break'' initial condition with $\alphaValue\alpha = w_0$.} 
       	\fi
    	\label{fig:snapshot dual dam break rhobar \y}
    \end{figure}
    \clearpage
}

\section{Conclusions and outlook}
    Inspired by the SAR images of sea-surface wavefronts regarded as the signatures of the dynamics of internal waves propagating below the surface, we proposed in the introduction to derive a single-layer minimal model of the surface velocity and elevation whose solution behaviour would mimic the dynamics of the curved wavefronts seen in the SAR images, figure \ref{fig: Gibralter} and \ref{fig: South China Sea}. The computationally simulated solutions of the H1ModCH2 minimal model illustrated the emergence of trains of wavefronts which evolved into complex patterns as they propagated away from localised disturbances of equilibrium and interacted nonlinearly with each other through collisions, stretching and reconnection. 
    
    To mimic the wave-current interaction that drives the curved wavefronts seen in SAR images, we investigated a variety of computational simulation scenarios which addressed two questions. First, we asked how would an initial condition evolve if there were a current possessing kinetic energy, but the surface were flat and, thus, had no gravitational potential energy? This question was answered in sections \ref{ssec: plate}, \ref{ssec: skew} and \ref{ssec: wedge} to which we refer for details. Second, we considered the converse question for initial conditions in which a stationary elevation was released into still water, as discussed in sections \ref{ssec: single dam break} and \ref{ssec: dual dam break}. In addition, we have also mimicked the reconnection properties of the internal waves signatures in the cases where wavefront collisions occur.
    
    
    We had hoped that the singular momentum map solutions discussed in section \ref{ssec: singmomap} would emerge from our simulations of wavefront trains arising from localised disturbances. This would have reduced the problem of  wave-current interaction among sea-surface wavefronts to the much simpler problem of mutual advection among curves in the plane. This would have been the case, of course, if we had started with the singular momentum map in equation \eqref{sing-soln-thm} as the solution ansatz which would follow the dynamics in \eqref{singsoldynamics}. However, we hoped to see wavetrains of singular solutions on embedded curve segments emerge from generic smooth confined initial conditions. In fact, we did see that effect in some of the simulations. We saw that wavetrains of peakon curves did form in some cases of our suite of simulated energy exchange dynamics, more specifically, the dam break problem with zero bathymetry in section \ref{ssec: single dam break}. However, in some other cases such as the ``plate'' in section \ref{ssec: plate}, the wavetrains of peakon curves did not form completely. That is, the singular solutions supported on embedded curves did not always form completely during the time intervals of our simulations. Moreover, in the ``dam break'' initial condition in section \ref{ssec: single dam break}, the leading peaks in the elevation began to form, and then slowly ebbed away and disappeared as other peaks emerged behind them and then disappeared later, as well. 
    
    So, the question of emergence of the singular momentum map solutions in \eqref{sing-soln-thm} for ModCH2 from a smooth confined initial condition dynamics remains open. In particular, the question is, under what conditions will a solution of the 2D ModCH2 equations starting from a smooth confined initial condition indeed produce a train of singular peakon curve segments, if ever?

\section*{Acknowledgements}
This paper was written in honour of Tony Bloch's 65th birthday. Happy birthday, Tony! Best wishes for many more fruitful happy years of collaboration with your friends in the geometric mechanics community. 
We would also like to thank our other friends and colleagues who have generously offered their attention, thoughts and encouragement in the course of this work during the time of COVID-19. 
DH is grateful for partial support from ERC Synergy Grant 856408 - STUOD (Stochastic Transport in Upper Ocean Dynamics). RH is supported by an EPSRC scholarship [grant number EP/R513052/1].




\begin{thebibliography}{}

\bibitem{BLLH2017}
Bai, X., Li, X., Lamb, K.G. and Hu, J., 2017. Internal solitary wave reflection near Dongsha atoll, the South China Sea. Journal of Geophysical Research: Oceans, 122(10), pp.7978-7991. \url{https://doi.org/10.1002/2017JC012880}

\bibitem{BC2009}
Barros, R. and Choi, W., 2009. Inhibiting shear instability induced by large amplitude internal solitary waves in two-layer flows with a free surface. Stud. Appl. Math. 122, 325–346. \url{https://doi.org/10.1111/ j.1467-9590.2009.00436.x}



\bibitem{BR1982}
Burbea, J. and Rao, C.R., 1982. Entropy differential metric, distance and divergence measures in probability spaces: A unified approach. Journal of Multivariate Analysis, 12(4), pp.575-596. \url{https://doi.org/10.1016/0047-259X(82)90065-3}

\bibitem{CHMR1998}
Cendra, H., Holm, D.D., Marsden, J.E. and Ratiu, T.S., 1998. Lagrangian reduction, the Euler-Poincar\'e equations, and semidirect products. Translations of the American Mathematical Society-Series 2, 186, pp.1-26. \url{https://arxiv.org/abs/chao-dyn/9906004}

\bibitem{CK2009}
Chakravarty, S. and Kodama, Y., 2009. Soliton solutions of the KP equation and application to shallow water waves. Studies in Applied Mathematics, 123(1), pp.83-151. \url{https://doi.org/10.1111/j.1467-9590.2009.00448.x}

\bibitem{CFHOTW1998}
Chen, S., Foias, C., Holm, D.D., Olson, E., Titi, E.S. and Wynne, S., 1998. Camassa-Holm equations as a closure model for turbulent channel and pipe flow. Physical Review Letters, 81(24), p.5338. \url{https://doi.org/10.1103/PhysRevLett.81.5338}

\bibitem{CFHOTW1999}
Chen, S., Foias, C., Holm, D.D., Olson, E., Titi, E.S. and Wynne, S., 1999. A connection between the Camassa-Holm equations and turbulent flows in channels and pipes. Physics of Fluids, 11(8), pp.2343-2353. \url{https://doi.org/10.1063/1.870096}

\bibitem{CZ2006} 
Chen, M., and Zhang, Y., 2006. A two-component generalization of the Camassa-Holm equation and its solutions. 
Letters in Mathematical Physics, 75(1), 1-15. \url{https://doi.org/10.1007/s11005-005-0041-7}

\bibitem{CC1996}
Choi, W. and Camassa, R., 1996. Weakly nonlinear internal waves in a two-fluid system. J. Fluid Mech. 313, 83–103. \url{https://doi.org/10.1017/S0022112096002133}

\bibitem{CC1999}
Choi, W. and Camassa, R., 1999. Fully nonlinear internal waves in a two-fluid system. J. Fluid Mech. 396, 1–36. \url{https://doi.org/10.1017/S0022112099005820}


\bibitem{CL2009} 
Constantin, A. and Lannes, D., 2009. The hydrodynamical relevance of the Camassa-Holm and
Degasperis-Procesi equations. Arch. Ration. Mech. Anal., 192(1):165-186, hal-00726128. \url{https://doi.org/10.1007/s00205-008-0128-2}

\bibitem{CEHJM2016}
Cotter, C.J., Eldering, J., Holm, D.D., Jacobs, H.O. and Meier, D.M., 2016. Weak Dual Pairs and Jetlet Methods for Ideal Incompressible Fluid Models in $n\ge 2$ Dimensions. Journal of Nonlinear Science, 26(6), pp.1723-1765. \url{https://doi.org/10.1007/s00332-016-9317-6}

\bibitem{CHJM2014}
Cotter, C.J., Holm, D.D., Jacobs, H.O. and Meier, D.M., 2014. 
A jetlet hierarchy for ideal fluid dynamics.
J. Phys. A: Math. Theor. 47 352001.
\url{https://doi.org/10.1088/1751-8113/47/35/352001}

\bibitem{CHP2010} 
Cotter, C.J., Holm, D.D., Percival, J.R., 2010. 
The square root depth wave equations,
Proc Roy Soc A Math Phys, 466: 3621-3633.
\url{http://dx.doi.org/10.1098/rspa.2010.0124}

\bibitem{D2021}
Duch\^ene, V., 2021. Many Models for Water Waves (Doctoral dissertation, Universit\'e de Rennes 1). \url{https://tel.archives-ouvertes.fr/tel-03282212}

\bibitem{F2005}
Falqui, G., 2005. On a Camassa-Holm type equation with two dependent variables. 
Journal of Physics A: Mathematical and General, 39(2), 327. \url{https://doi.org/10.1088/0305-4470/39/2/004}

\bibitem{FHT2001} 
Foias, C., Holm, D.D. and Titi, E.S., 2001. The Navier-Stokes-alpha model of fluid turbulence. Physica D: Nonlinear Phenomena, 152, pp.505-519. \url{https://doi.org/10.1016/S0167-2789(01)00191-9}

\bibitem{FHT2002}
Foias, C., Holm, D.D. and Titi, E.S., 2002. The three dimensional viscous Camassa-Holm equations, and their relation to the Navier-Stokes equations and turbulence theory. Journal of Dynamics and Differential Equations, 14(1), pp.1-35. \url{https://doi.org/10.1023/A:1012984210582}

\bibitem{G-BV2012}
Gay-Balmaz, F. and Vizman, C., 2012. Dual pairs in fluid dynamics. Annals of Global Analysis and Geometry, 41(1), pp.1-24. \url{https://doi.org/10.1007/s10455-011-9267-z}.


\bibitem{Holm2001}
Holm, D.D., 2001. Variational Principles, Geometry and Topology of Lagrangian-Averaged Fluid Dynamics. In: Ricca R.L. (eds) An Introduction to the Geometry and Topology of Fluid Flows. NATO Science Series (Series II: Mathematics, Physics and Chemistry), vol 47. Springer, Dordrecht. \url{https://doi.org/10.1007/978-94-010-0446-6_14} 

\bibitem{Holm2015}
Holm, D.D., 2015. Variational principles for stochastic fluid dynamics. Proc. R. Soc. A. 471: 20140963
\url{http://doi.org/10.1098/rspa.2014.0963}


\bibitem{HJ2009}
Holm, D.D. and Jacobs, H.O., 2017. Multipole vortex blobs (MVB): Symplectic geometry and dynamics. Journal of nonlinear science, 27(3), pp.973-1006. \url{https://doi.org/10.1007/s00332-017-9367-4}

\bibitem{HM2005} 
Holm, D.D. and Marsden, J.E., 2005. Momentum maps and measure-valued solutions (peakons, filaments, and sheets) for the EPDiff equation. In \emph{The breadth of symplectic and Poisson geometry} (pp. 203-235). Birkh\"auser Boston. \url{https://doi.org/10.1007/0-8176-4419-9_8}


\bibitem{HMR1998PRL}
Holm, D.D., Marsden, J.E. and Ratiu, T.S., 1998. Euler-Poincaré models of ideal fluids with nonlinear dispersion. Physical Review Letters, 80(19), p.4173. \url{https://doi.org/10.1103/PhysRevLett.80.4173}

\bibitem{HMR1998}
Holm, D.D., Marsden, J.E. and Ratiu, T.S. 1998 The Euler–Poincar\'e equations and semidirect products with applications to continuum theories. Adv. Math. 137, 1–81. \url{https://doi.org/10.1006/ aima.1998.1721}

\bibitem{HOT2009} 
Holm, D.D., O'N\'araigh, L. and Tronci, C., 2009. Singular solutions of a modified two-component Camassa-Holm equation. Physical Review E, 79(1), p.016601.
\url{https://doi.org/10.1103/PhysRevE.79.016601}

\bibitem{HS2013}
Holm, D.D. and Staley, M.F.., 2013. Interaction dynamics of singular wave fronts
\url{https://arxiv.org/abs/1301.1460}

\bibitem{HT2009}
Holm, D.D. and Tronci, C., 2009. Geodesic flows on semidirect-product Lie groups: geometry of singular measure-valued solutions. Proceedings of the Royal Society A: Mathematical, Physical and Engineering Sciences, 465(2102), pp.457-476. \url{https://doi.org/10.1098/rspa.2008.0263}

\bibitem{I-K2021}
D. Ionescu-Kruse, Mathematical Methods in Water Wave Problems, 
Habilitation Thesis (2021), Romanian Academy, 281 p. 

\bibitem{JC2002}
Jo, T.-C. and Choi, W., 2002. Dynamics of strongly nonlinear internal solitary waves in shallow water. Stud. Appl. Math. 109, 205–227. \url{https://doi.org/10.1111/1467-9590.00222}

\bibitem{KSD2001}
Kruse, H.P., Scheurle, J. and Du, W., 2001. A two-dimensional version of the Camassa-Holm equation. In Symmetry and Perturbation Theory: SPT 2001 (pp. 120-127). \url{https://doi.org/10.1142/9789812794543_0017}


\bibitem{LW1997}
Liska, R. and Wendroff, B., 1997. Analysis and computation with stratified fluid models. J. Comput. Phys. 137, 212–244. \url{https://doi.org/10.1006/jcph.1997.5806}

\bibitem{LCHL1998}
Liu, A.K., Chang, Y.S., Hsu, M.K. and Liang, N.K., 1998. Evolution of nonlinear internal waves in the East and South China Seas. J. Geophys. Res. 103, 7995–8008. \url{https://doi.org/10.1029/97JC01918} 


\bibitem{MW1983}
Marsden, J. and Weinstein, A., 1983. Coadjoint orbits, vortices, and Clebsch variables for incompressible fluids. Physica D: Nonlinear Phenomena, 7(1-3), pp.305-323.
\url{https://doi.org/10.1016/0167-2789(83)90134-3}

\bibitem{PCH2008}
Percival, J.R., Cotter, C.J. and Holm, D.D., 2008. A Euler–Poincar\'e framework for the multilayer
Green–Naghdi equations. J. Phys. A: Math. Theoret. 41, 344018. \url{https://doi.org/10.1088/1751-8113/ 41/34/344018}

\bibitem{S2012}
Soomere T. (2012) Solitons Interactions. In: Meyers R. (eds) Mathematics of Complexity and Dynamical Systems. Springer, New York, NY. \url{https://doi.org/10.1007/978-1-4614-1806-1_101}

\bibitem{V2019}
Verdoolaege, G., 2019. Information geometry. MDPI.

\bibitem{W1983}
Weinstein, A., 1983. The local structure of Poisson manifolds, J. Differential Geom. 18, 523-557. \url{https://doi.org/10.4310/jdg/1214437787}

\bibitem{ZLL2014}
Zhao, Z., Liu, B. and Li, X., 2014. Internal solitary waves in the China seas observed using satellite remote-sensing techniques: A review and perspectives. International Journal of Remote Sensing, 35(11-12), pp.3926-3946. \url{https://doi.org/10.1080/01431161.2014.916442}


\end{thebibliography}
\end{document}